\title{Epistemology of Wave Function Collapse\\ in Quantum Physics}
\author{
Charles Wesley Cowan\footnote{Department of Mathematics,
     Rutgers University, Hill Center,  
     110 Frelinghuysen Road, Piscataway, NJ 08854-8019, USA.}\ \footnote{E-mail:
     cwcowan@math.rutgers.edu}\ \ and 
Roderich Tumulka$^*$\footnote{E-mail: tumulka@math.rutgers.edu}
}
\date{February 19, 2014}
\newcommand{\Hilbert}{\mathscr{H}}
\newcommand{\EEE}{\mathbb{E}}
\newcommand{\PPP}{\mathbb{P}}
\newcommand{\RRR}{\mathbb{R}}
\newcommand{\SSS}{\mathbb{S}}
\newcommand{\scp}[2]{\langle #1|#2 \rangle}
\newcommand{\pr}[1]{| #1 \rangle \langle #1 |}
\newcommand{\be}{\begin{equation}}
\newcommand{\ee}{\end{equation}}
\newcommand{\yes}{\mathrm{yes}}
\newcommand{\no}{\mathrm{no}}
\newcommand{\vQ}{Q} 
\newcommand{\vu}{u} 
\newcommand{\E}{\mathscr{E}}
\newcommand{\sys}{\mathrm{sys}}
\newcommand{\mm}{\widetilde{m}}
\newcommand{\MM}{\widetilde{M}}
\DeclareMathOperator{\tr}{tr}
\DeclareMathOperator{\diag}{diag}
\theoremstyle{plain}\newtheorem{prop}{Proposition}
\newcommand{\R}{\mathscr{R}}
\newcommand{\N}{\mathscr{N}}
\begin{document}
\maketitle
\begin{abstract}
Among several possibilities for what reality could be like in view of the empirical facts of quantum mechanics, one is provided by theories of spontaneous wave function collapse, the best known of which is the Ghirardi--Rimini--Weber (GRW) theory. We show mathematically that in GRW theory (and similar theories) there are limitations to knowledge, that is, inhabitants of a GRW universe cannot find out all the facts true about their universe. As a specific example, they cannot accurately measure the number of collapses that a given physical system undergoes during a given time interval; in fact, they cannot reliably measure whether one or zero collapses occur. Put differently, in a GRW universe certain meaningful, factual questions are empirically undecidable. We discuss several types of limitations to knowledge and compare them with those in other (no-collapse) versions of quantum mechanics, such as Bohmian mechanics. Most of our results also apply to observer-induced collapses as in orthodox quantum mechanics (as opposed to the spontaneous collapses of GRW theory).

\medskip

Key words: collapse of the wave function; limitations to knowledge; absolute uncertainty; problems with positivism; empirically undecidable; distinguish two density matrices; quantum measurements; foundations of quantum mechanics; Ghirardi--Rimini--Weber (GRW) theory; random wave function.
\end{abstract}
\newpage

\tableofcontents

\bigskip

\noindent\textit{Science may set limits to knowledge, but should not set limits to imagination.}\\
\hfill (Bertrand Russell, 1872--1970)

\section{Introduction}

Since science provides us with knowledge, it may seem surprising that it sometimes sets limitations to knowledge. By a ``limitation to knowledge'' we mean that certain facts about the world cannot be discovered or confirmed in an empirical way, no matter how big our effort, including possible future technological advances. For example, a limitation to knowledge is in place if a quantity cannot be measured although it has a well-defined value. A limitation to knowledge means that there is a fact, and we cannot know what it is, nor even guess with much of a chance of guessing correctly. Nature knows and we do not.

In this paper, we discuss certain limitations to knowledge concerning the collapse of the wave function in quantum physics. Specifically, we investigate limitations to measuring whether or not a collapse has occurred. Our results are \emph{epistemology} in the sense that they concern the possibility of a particular type of knowledge. Since they can be proved via mathematical theorems, they can be said to fall into the field of mathematical epistemology. Some preliminary results have been reported in \cite{grw3A}.

The very idea of a limitation to knowledge may seem to go against the principles of science. If there is no way of measuring a quantity $X$, then this may suggest that $X$ does not actually have a well-defined value, i.e., that nature does not know either what $X$ is. For example, in the early days of relativity theory, Lorentz and Fitzgerald proposed that the ether causes a length contraction of moving objects, which implies that the speed of earth relative to the ether cannot be measured; however, this situation suggests, as argued by Einstein, that the ``speed relative to the ether'' is not well defined. 

However, the existence of limitations to knowledge is a fact, as it is a simple consequence of quantum mechanics, independently of which interpretation of quantum mechanics we prefer. For example, 
suppose Alice prepares an ensemble of quantum systems, each with a pure state chosen randomly with distribution $\mu_1$ over the unit sphere
\be\label{SSSdef}
\SSS(\Hilbert)=\bigl\{ \psi\in\Hilbert:\|\psi\|=1\bigr\}
\ee
in Hilbert space $\Hilbert$. Suppose further that $\mu_2\neq \mu_1$ is another distribution over $\SSS(\Hilbert)$ with the same density matrix, $\rho_{\mu_1}=\rho_{\mu_2}$, where
\be\label{rhomudef}
\rho_\mu = \int\limits_{\SSS(\Hilbert)} \!\! \mu(d\psi) \, \pr{\psi}\,.
\ee
Then Bob is unable to determine (even probabilistically) by means of experiments on the systems whether Alice used $\mu_1$ or $\mu_2$ (see Appendix~\ref{sec:proofs} for the proof), while there is a fact about whether the states actually have distribution $\mu_1$ or $\mu_2$, as Alice knows the pure state of each system. Thus, the predictions of quantum mechanics imply that there are facts in the world which cannot be discovered empirically.\footnote{This argument has a curious feature that is worth commenting on. While its goal is to show that there is some information that nature knows but no observer can obtain, it actually describes a situation in which Alice (who is an observer, one would guess) is in possession of the relevant information, and only Bob cannot obtain it---except perhaps by spying out Alice's notebook! So it may seem that the argument cannot reach its goal. However, the argument does show that there is no way to obtain a certain information about a system from interaction with the system, and that is enough for our purposes. Alice then plays the role of proving (to a positivist) that that information objectively exists. Alternatively, one could also argue as follows. If Alice destroyed the records of the states she prepared then nobody would know, while it seems plausible that nature still knows the states (since it seems implausible that the states would suddenly become indefinite if Alice burns her notebook). Thus, the example also easily provides a case in which nobody is in possession of the relevant information.}

What is upsetting about limitations to knowledge is that they conflict with key ideas of (what may be called) \emph{positivism}: That a statement is unscientific or even meaningless if it cannot be tested experimentally, that an object is not real if it cannot be observed, and that a variable is not well-defined if it cannot be measured. We conclude that this form of positivism is exaggerated; it is inadequate. The above example with Alice and Bob refutes it. As an even simpler example against this form of positivism, suppose a space ship falls into a black hole; it seems reasonable to believe that it continues to exist for a while although we cannot observe it any more. While perhaps nobody would defend positivism as we described it, it is frequently applied in physics reasoning, particularly in quantum physics---ironically so, since its inadequacy is particularly clear in quantum physics, as the example with Alice and Bob shows. Needless to say, positivism is also \emph{unnecessary} in quantum physics, as demonstrated by the viability of ``realist'' interpretations of quantum mechanics (``quantum theories without observers'') such as Bohmian mechanics \cite{Bohm52,Gol01b}, spontaneous collapse theories \cite{GRW86,Bell87,Pe89,Dio89,Ghi07}, and perhaps (some versions of) many-worlds \cite{Eve57,Vai02,AGTZ11}. 

It is sometimes suggested that a theory is unconvincing if it entails limitations to knowledge. We do not share this sentiment. On the contrary, we think it is hard to defend in view of the considerations around \eqref{rhomudef}.

In this paper, we deal primarily with spontaneous collapse theories, concretely with the simplest and best-known one, the Ghirardi--Rimini--Weber (GRW) theory \cite{GRW86,Bell87} in the versions GRWm \cite{Dio89,BGG95,Gol98} and GRWf \cite{Bell87,Tum04}; for an introduction see, e.g., \cite{AGTZ06,Bell87,Ghi07}. We briefly review GRW theory in Section~\ref{sec:introGRW}. Quantum theories without observers have no need for the orthodox quantum philosophy of ``complementarity.'' In these theories, it is clear what is real. And thus, the possibility arises that certain things that are real cannot be observed.

It is necessary to distinguish between \emph{quantum measurements} and (what we call) \emph{genuine measurements}. Quantum measurements are not literally measurements in the ordinary sense, i.e., they are not experiments for discovering the values of variables that have well-defined values.\footnote{Indeed, it is the content of the ``no-hidden-variable theorems'' such as Gleason's and Kochen--Specker's that one cannot think of the outcomes of quantum measurements as values that were already known to nature before the experiment, and merely made known to us by the experiment.} In contrast, in a quantum theory without observers there are several variables that are supposed to have well-defined values, and we may ask whether and how we could measure these: for example, the wave function $\psi$, the matter density $m(x,t)$ in GRWm, the flashes in GRWf. We call experiments discovering these values genuine measurements.

We show that certain well-defined variables in GRW theories do not permit genuine measurements; that is, that certain facts in GRW worlds cannot be found out by the inhabitants of those worlds. At the same time, there is no way of replacing the GRW theories with simpler and more parsimonious theories by merely denying the factual character of what the inhabitants cannot find out,\footnote{That is, if, in reaction to our result that the flashes cannot be measured accurately, one dropped the flashes from the reality, then one would end up with the GRW wave function without a primitive ontology, which is not a satisfactory physical theory \cite{Mon02,Mon04,AGTZ06,Mau10,All13a,All13b}. Likewise, if, in reaction to our result that the number of collapses in a given time interval $[t_1,t_2]$ cannot be accurately measured, one decided that there is no fact about the number of collapses in $[t_1,t_2]$, then one would have to drop the GRW law of wave function evolution, and thus leave the framework of GRW theories.} much in contrast to an unobservable ether whose existence can very well be denied. 

These limitations to knowledge arise as a consequence of the defining equations of the theories. They are not further, ad hoc postulates; and they do not require carefully contrived, or conspiratorial, initial conditions. Instead, they are dictated by the physical laws. The GRW theories thus exemplify Einstein's dictum that ``it is the theory which decides what can be observed'' \cite{Ein26}. The situation is in a way parallel to that of black holes, where also the physical law itself (in that case, the Einstein equation of general relativity) implies that observers outside the black hole cannot find out what happens inside. We emphasize that the limitations of knowledge in GRW theories do not in any way represent a drawback of these theories.

This paper is organized as follows. In the remainder of Section~1, we provide background on the concept of a limitation to knowledge by describing examples. In Section~\ref{sec:introGRW} we review the definition of the GRW theories. In Section~\ref{sec:main} we introduce our main tool and prove, as a first result, that the function $m(x)$ in GRWm cannot be measured with microscopic accuracy. In Section~\ref{sec:f} we present detailed results about the accuracy with which a collapse can be detected. In Section~\ref{sec:m} we investigate the accuracy with which the function $m(x)$ in GRWm can be measured.

\subsection{Known Examples of Limitations to Knowledge}
 
\begin{enumerate}
\item Some limitations to knowledge are quite familiar: It is impossible to look into the future (e.g., to find out next week's stock prices in less time than a week), to look directly into the past (e.g., to determine systematically who Jack the Ripper was), to look into a spacelike separated region (superluminal signaling), or to look into a black hole.

\item\label{item:exq1} Quantum theory is particularly rich in limitations to knowledge, as exemplified by items \ref{item:exq1}--\ref{item:exqlast} in this list. To begin with, as already mentioned, it is impossible to distinguish empirically between two different ensembles of wave functions with the same density matrix.

\item It is impossible to measure the wave function of an individual system. For example, if Alice chooses a direction in space and prepares a single particle in a pure spin state pointing in this direction then Bob cannot determine this direction by means of whichever experiment on the particle. The best Bob can do is a Stern--Gerlach experiment in (say) the $z$ direction, which yields one bit of information and tells Bob whether Alice's chosen direction is more likely to lie in the upper or lower hemisphere. Bob can do better if Alice prepares $N\gg 1$ disentangled particles, each in the same pure spin state; by means of Stern--Gerlach experiments in different directions and a statistical analysis of the results, he can estimate the direction with arbitrary accuracy (with high probability) for sufficiently large $N$. (It is sometimes suggested that ``protective measurements'' can measure the wave function. However, these experiments involve a mechanism that restores the initial wave function $\psi_0$ of the system after a (weak) interaction with the apparatus, and many repetitions of the procedure. Thus, in effect, the apparatus is provided with many copies of the same wave function $\psi_0$, so that the possibility of determining $\psi_0$ is in agreement with what we said earlier in this paragraph.)

\item A different type of limitation to knowledge comes up in quantum cryptography, specifically in quantum key distribution: The eavesdropper Eve cannot obtain useful information about the key created by Alice and Bob using a quantum key distribution scheme without leaving traces that Alice and Bob can detect.

\item ``Absolute uncertainty'' in Bohmian mechanics \cite{DGZ92}: Given that the (conditional) wave function of a system is $\psi$, it is impossible for an inhabitant of a Bohmian universe to know the system's configuration more accurately than allowed by the $|\psi|^2$ distribution. If the configuration gets measured more accurately, then the conditional wave function becomes narrower. (However, while Bohmian mechanics puts limitations to knowing the configuration, it still allows in principle to measure the configuration with arbitrary accuracy; thus, the access to the configuration is less restricted than the one to the $m$ function in GRWm, which, as we show in Section~\ref{sec:main}, cannot be measured with microscopic accuracy.)

\item It is impossible to measure the velocity of a particle in Bohmian mechanics \cite{DGZ04,DGZ09} (except when information about the wave function is given). That is, there is no machine into which we could insert a particle with arbitrary wave function $\psi$ and which would display correctly the velocity of the particle (i.e., the velocity right before the experiment). Note that the measurement of velocity would amount to (what we called) a \emph{genuine measurement}, not a quantum measurement.

It is possible, in contrast, to build a machine that will correctly measure the velocity if the machine is told what $\psi$ is; for example, the machine could measure the position of the particle (to sufficient accuracy) and then compute the velocity using Bohm's equation of motion. It is also in principle possible to build a machine that correctly displays the velocity \emph{after} the experiment without being given information about the wave function $\psi$ before the experiment; a formal quantum measurement of the momentum observable achieves this. It is also possible to build a machine that correctly displays the velocity for a limited set of $\psi$s (such as approximate momentum eigenstates, i.e., wide packets of plane waves).

\item\label{item:laws} It is impossible to distinguish empirically between certain different versions of Bohmian mechanics \cite{GTTZ05b,DG98}, all of which lead to the appropriate $|\psi|^2$ distribution for macroscopic configurations, or between Bohmian mechanics and Nelson's stochastic mechanics \cite{Nelson,Gol87}, or (presumably) between Bohmian mechanics and some versions of the many-worlds theory \cite{AGTZ11}.

\item It is impossible to distinguish empirically between GRWm and GRWf \cite{AGTZ06,grw3A}.

\item The study of Colin \emph{et al.}\ on superselection rules \cite{CDT05} identified cases in GRWm in which a ``weak'' but no ``strong'' superselection rule holds, which means that for every superposition $\psi$ (of eigenvectors of the superselected operator) there is a mixture $\mu$ (of eigenvectors) that cannot be empirically distinguished from $\psi$, while $\mu$ leads to different histories of the primitive ontology (PO, or local beables) than $\psi$. As a consequence, the difference between the PO arising from $\psi$ and that arising from $\mu$ cannot be detected empirically. The concrete example in GRWm amounts more or less to the fact that the $m$ function cannot be measured with microscopic accuracy.

\item\label{item:exqlast} Whether the Heisenberg uncertainty relation is or is not an instance of a limitation to knowledge, depends on the precise version of quantum mechanics used. 

In Bohmian mechanics, it is: Certain standard experiments realizing a ``quantum measurement of the momentum observable'' (such as letting a particle move freely for a long time $t$, then measuring its position $\vQ(t)$ to sufficient accuracy, and multiplying by $m/t$) actually measure (mass times) the long-time average of the Bohmian velocity, $\vu=\lim_{t\to\infty} \vQ(t)/t$, which is a deterministic function of the initial position $\vQ(0)$ and the initial wave function $\psi_0$, $\vu=\vu(\vQ(0),\psi_0)$. The Heisenberg uncertainty relation implies that for a Bohmian particle with position $\vQ$ and wave function $\psi$, even if $\psi$ is known, the values of $\vQ$ and $\vu(\vQ,\psi)$ cannot both be known with arbitrary accuracy, although both quantities have precise values in reality.\footnote{It is possible, however, to measure $\vQ$ (to sufficient accuracy) and then calculate $\vu(\vQ,\psi)$ if $\psi$ is known. Since this measurement will change the wave function, it is afterwards still not known what the present value of $\vu$ is; we only obtain information about what $\vu(\vQ,\psi)$ was before the measurement, not about what it is now after the measurement.}

In collapse theories such as GRWm and GRWf, in contrast, there is no precise value of either the position or the momentum observable before $\E$, if $\E$ is an experiment that can be regarded as a ``quantum measurement of position or momentum.'' In particular, $\E$ is not a measurement in the literal sense. Rather, the outcome of such an experiment is a random value that is only generated in the course of the experiment. Since, as long as no such experiment is carried out, there is no fact about the value of the position or the momentum observable, there is nothing to be ignorant of. As a consequence, the Heisenberg uncertainty relation does not constitute a limitation to knowledge in GRW theories.

\item Chaos leads to practical limitations: If the behavior of a (classical) dynamical system depends sensitively on the initial conditions, and if our knowledge of the initial conditions has limited accuracy, then we may be unable to predict the behavior although the system is deterministic in principle. This fact can be regarded as a limitation to knowledge, too, but there is a fundamental difference to the limitations of knowledge discussed in this paper: There is no reason in sight for why this limitation should be unsurmountable. If we are willing to make a bigger effort when measuring the initial conditions so as to obtain higher accuracy, and if we are willing to make a bigger effort in the computation of predictions, then we may be able to predict the behavior of the system for a longer time interval.
\end{enumerate}

\subsection{Remarks}

\begin{enumerate}
\item There are parallels between the limitations to knowledge discussed in this paper and G\"odel's incompleteness theorem \cite{G31}. While G\"odel's theorem concerns a mathematical statement that is formally undecidable, our results concern physical statements that are empirically undecidable. The distinction between true and provable in mathematics resembles the distinction between real and observable in physics. Mathematical platonism, which can be described as the view that a mathematical statement can be true even if it is not provable, is parallel to realism, which can be described as the view that something can be a physical fact even if it is not observable; mathematical formalism, the opposite view, is parallel to physical positivism. A basic difference between G\"odel's theorem and our results is that we \emph{can} know the truth value of G\"odel's undecidable statement: it is true. That is, while the truth value is unknowable to the formal system, it is knowable to us. Thus, the situation of G\"odel's undecidable statement is more analogous to that of the physical laws (as discussed in item \ref{item:laws} above) than to that of measuring, say, the number of collapses. Namely, while we often have no more information about the number of collapses than its a priori probability distribution, we may be able to guess the physical laws, and thus prefer one among several empirically equivalent theories, on the basis of the simplicity, naturalness, elegance, and plausibility of the theory, even though our intuition will not provide as reliable and clear a decision as about the truth value of G\"odel's undecidable statement. Finally, we note that there may be, besides G\"odel's statement, other examples of formally undecidable statements that leave us completely and permanently in the dark as to what their truth values are.

\item There are also parallels between the limitations to knowledge discussed in this paper and Carnot's theory of heat engines: Not all of the energy contained in a system can be extracted in a useful form (i.e., as work); not all of the information contained in a system can be extracted in a useful form (i.e., as human knowledge). By the latter statement we mean that not all of the facts true of a system can be found out empirically.

\end{enumerate}

\section{Brief Review of GRW Theories}
\label{sec:introGRW}

\subsection{The GRW Process}
\label{sec:grwprocess}

In both the GRWm and the GRWf theory the evolution of the wave function follows, instead of the Schr\"odinger equation, a stochastic jump process in Hilbert space, called the GRW process. Consider a quantum system of (what would normally be called) $N$ ``particles,'' described by a wave function $\psi = \psi(q_1, \ldots,q_N)$, $q_i\in \RRR^3$, $i=1,\dots, N$.  The GRW process behaves as if an ``observer'' outside the universe made unsharp ``quantum measurements'' of the position observable of a randomly selected particle at random times $T_1,T_2,\ldots$ that occur with constant rate $N\lambda$, where $\lambda$ is a new constant of nature of order of $10^{-16}\, \text{s}^{-1}$, called the collapse rate per particle. The wave function ``collapses'' at every time $T=T_k$, i.e., it changes discontinuously and randomly as follows. The post-collapse wave function $\psi_{T+}=\lim_{t\searrow T}\psi_t$ is obtained from the pre-collapse wave function $\psi_{T-}=\lim_{t\nearrow T} \psi_t$ by multiplication by a Gaussian function,
\be\label{collapse}
\psi_{T+}(q_1,\ldots,q_N) = \frac{1}{\N} g_\sigma(q_I-X)^{1/2}\, \psi_{T-}(q_1,\ldots,q_N) \,,
\ee
where
\be\label{Gaussian}
g_\sigma(x) = \frac{1}{(2\pi\sigma^2)^{3/2}} e^{-\frac{x^2}{2\sigma^2}}
\ee
is the 3-dimensional Gaussian function of width $\sigma$, $I$ is chosen randomly from $1,\ldots,N$, and
\be\label{scrNdef}
\N=\N(X)=\biggl(\int_{\RRR^{3N}} dq_1\cdots dq_N\, g_\sigma(q_I-X)\, |\psi_{T-}(q_1,\ldots,q_N)|^2\biggr)^{1/2}
\ee
is a normalization factor. The width $\sigma$ is another new constant of nature of order of $10^{-7}\, \text{m}$, while the center $X=X_k$ is chosen randomly with probability density
\be
\rho(x)=\N(x)^2\,.
\ee
We will refer to $(X_k,T_k)$ as the space-time location of the collapse.

Between the collapses, the wave function evolves according to the Schr\"odinger equation corresponding to the standard Hamiltonian $H$ governing the system, e.g., given, for $N$ spinless particles, by
\begin{equation}\label{eq:H}
  H=-\sum_{i=1}^N\frac{\hbar^2}{2m_i}\nabla^2_{q_i}+V,
\end{equation}
where $m_i$, $i=1, \ldots, N$, are the masses of the particles, and $V$ is the potential energy function of the system. Due to the stochastic evolution, the wave function $\psi_t$ at time $t$ is random.

This completes our description of the GRW law for the evolution of the wave function. According to the GRW theories, the wave function $\psi$ of the universe evolves according to this stochastic law, starting from the initial time (say, the big bang). As a consequence \cite{BG03,grw3A}, a subsystem of the universe (comprising $M<N$ ``particles'') will have a wave function $\varphi$ of its own that evolves according to the appropriate $M$-particle version of the GRW process during the time interval $[t_1,t_2]$, provided that $\psi(t_1)=\varphi(t_1)\otimes\chi(t_1)$ and that the system is isolated from its environment during that interval.

\bigskip

We now turn to the primitive ontology (PO), that is, the part of the ontology (i.e., of what exists, according to the theory) that represents matter in space and time (and of which macroscopic objects consist), according to the theory. Without such further ontology, the GRW theory would not be satisfactory as a fundamental physical theory \cite{Mon02,Mon04,AGTZ06,Mau10,All13a,All13b}. In the subsections below we present two versions of the GRW theory, based on two different choices of the PO, namely the \emph{matter density ontology} (GRWm in Section~\ref{sec:GRWm}) and the \emph{flash ontology} (GRWf in Section~\ref{sec:GRWf}).

\subsection{GRWm}
\label{sec:GRWm}

GRWm postulates that, at every time $t$, matter is continuously distributed in space with density function $m(x,t)$ for every location $x\in \RRR^3$, given by
\begin{align}
 m(x,t) 
 &= \sum_{i=1}^N m_i \int\limits_{\RRR^{3N}}  dq_1 \cdots dq_N \, \delta^3(q_i-x) \,  
 \bigl|\psi_t(q_1, \ldots, q_N)\bigr|^2 \label{mdef}\\
&= \sum_{i=1}^N m_i \int\limits_{\RRR^{3(N-1)}} \!\!\!\! dq_1\cdots dq_{i-1} \, dq_{i+1}\cdots dq_N\,
\bigl| \psi_t(q_1,\ldots,q_{i-1},x,q_{i+1},\ldots,q_N) \bigr|^2\,.
\end{align}
In words, one starts with the $|\psi|^2$--distribution in configuration space $\RRR^{3N}$, then obtains the marginal distribution of the $i$-th degree of freedom $q_i\in \RRR^3$
by integrating out all other variables $q_j$, $j \neq i$, multiplies by the mass associated with $q_i$, and sums over $i$. Alternatively, \eqref{mdef} can be rewritten as
\begin{equation}\label{mdef2}
  m(x,t) = \scp{\psi_t}{\widehat{M}(x) |\psi_t}
\end{equation}
with
\be\label{hatMdef}
\widehat{M}(x) = \sum_{i} m_i \, \delta^3(\widehat{Q}_i - x)
\ee
the mass density operator, defined in terms of the position operators $\widehat{Q}_i \psi(q_1,\ldots,q_N) = q_i\,\psi(q_1,\ldots,q_N)$.

\subsection{GRWf}
\label{sec:GRWf}

According to  GRWf, the PO is given by ``events'' in space-time called flashes, mathematically described by points in space-time. What this means is that in GRWf matter is neither made of particles following world lines, nor of a continuous distribution of matter such as in GRWm, but rather of discrete points in space-time, in fact finitely many points in every bounded space-time region.

In the GRWf theory, the space-time locations of the flashes can be read off from the history of the wave function: every flash corresponds to one of the spontaneous collapses of the wave function, and its space-time location is just the space-time location of that collapse. The flashes form the set
\be
  F=\{(X_{1},T_{1}), \ldots, (X_{k},T_{k}), \ldots\}
\ee
(with $T_1<T_2<\ldots$). Alternatively, we may postulate that flashes can be of $N$ different types (``colors''), corresponding to the mathematical description
\be
F=\{(X_1,T_1,I_1),\ldots, (X_k,T_k,I_k),\ldots\}\,,
\ee
with $I_k$ the number of the particle affected by the $k$-th collapse.

Note that if the number $N$ of degrees of freedom in the wave function is large, as in the case of a macroscopic object, the number of flashes is also large (if $\lambda=10^{-15}$ s$^{-1}$ and $N=10^{23}$, we obtain $10^{8}$ flashes per second). Therefore, for a reasonable choice of the parameters of the GRWf theory, a cubic centimeter of
solid matter contains more than $10^8$ flashes per second. That is to say that large numbers of flashes can form macroscopic shapes, such as tables and chairs. That is how we find an image of our world in GRWf.

\bigskip

We should remark that the word ``particle'' can be misleading. According to GRWf, there are no particles in the world, just flashes and a wave function. According to GRWm, there are no particles, just continuously distributed matter and a wave function. The word ``particle'' should thus not be taken literally (just like, e.g., the word ``sunrise''); we use it only because it is common terminology in quantum mechanics.

\section{First Examples of Limitations to Knowledge in GRW Theories}
\label{sec:main}

An important tool for the analysis of limitations to knowledge is the \emph{main theorem about POVMs}, which says that \emph{for every experiment $\E$ on a system ``$\sys$'' there is a POVM (positive-operator-valued measure\footnote{A \emph{POVM} is a family of positive operators $E_z$ such that $\sum_z E_z=I$, the identity operator. It is also known as a \emph{generalized observable} and in fact generalizes the notion of a quantum observable represented by a self-adjoint operator $A$, which applies to an \emph{ideal quantum measurement}. For an ideal quantum measurement, the values $z$ are the eigenvalues of $A$ and $E_z$ is the projection to the eigenspace.}) $E$ on the value space of $\E$ acting on the system's Hilbert space $\Hilbert_\sys$ such that if $\sys$ has wave function $\psi$ and $\E$ is carried out then the outcome $Z$ has probability distribution}
\be\label{mainthmPOVM}
\PPP(Z=z)=\scp{\psi}{E_z|\psi}\,.
\ee
This theorem has been proven for Bohmian mechanics \cite{DGZ04}, GRWf \cite{Tum07,grw3A}, and GRWm \cite{grw3A}; a similar result for GRWm can be found in \cite{BGS07}. In orthodox quantum mechanics, the theorem is true as well, taking for granted that, after $\E$, a quantum measurement of the position observable of the pointer of $\E$'s apparatus will yield the result of $\E$.

From the main theorem about POVMs we can deduce a first limitation to knowledge \cite{grw3A}: that it is impossible to measure the matter density $m(x,t)$ in GRWm. More precisely, it is impossible to build a machine that will, when fed with a system with any wave function $\psi$, determine $m(x)$. This is because the outcome $Z$ of any experiment in a GRW world has a probability distribution $\PPP(Z=z)$ whose dependence on the wave function is quadratic, $\scp{\psi}{E(z)|\psi}$, while the $m(x)$ function (or, in fact, any functional of the wave function) is deterministic in $\psi$, that is, its probability distribution is a Dirac delta function and not quadratic.\footnote{The fact that the deterministic value of $m(x)$ is given by a quadratic expression in $\psi$, viz.~\eqref{mdef}, should not be confused with the condition that the probability distribution of $m(x)$ depend on $\psi$ in a quadratic way.} This result notwithstanding, it \emph{is} possible to measure $m(x)$ with limited accuracy, that is, to measure a macroscopic, coarse-grained version of $m(x)$; this we will study in Section~\ref{sec:m}.

The same type of argument shows \cite{grw3A} that it is impossible to measure the wave function $\psi_t$ of a system in either GRWm or GRWf (or in Bohmian mechanics, many-worlds, or orthodox quantum mechanics, for that matter).
 
Let us compare the situation in GRW theories to that of Bohmian mechanics. As mentioned above, the velocity of a given Bohmian particle is not measurable. On the other hand, there is no limitation in principle in Bohmian mechanics to measuring the position of a particle to arbitrary accuracy, except that doing so will alter the particle's (conditional) wave function, and thus its future trajectory. Here we encounter a basic difference between Bohmian mechanics and GRWm: the configuration of the primitive ontology can be measured in Bohmian mechanics but not in GRWm. (In Bohmian mechanics, the configuration of the primitive ontology corresponds to the positions of all particles, while in GRWm it corresponds to the $m(x,t)$ function for all $x\in\RRR^3$.) In GRWf, for comparison, there is nothing like a configuration of the primitive ontology \emph{at time $t$}, of which we could ask whether it can be measured. There is only a space-time \emph{history} of the primitive ontology, which we may wish to measure. Bohmian mechanics is an example of a world in which the history of a system cannot be measured without disturbing its course, and indeed disturbing it all the more drastically the more accurately we try to measure it. This suggests already that also in GRWf, measuring the pattern of flashes will entail disturbing it---and thus finding a pattern of flashes that is different from what would have occurred naturally (i.e., without intervention), so that this experiment could not be regarded as a genuine measurement of the pattern of flashes.

\section{Measurements of Flashes in GRWf, or of Collapses in GRWm}
\label{sec:f}

Even if we accept disturbances, the following heuristic reasoning suggests that individual flashes cannot be detected. Suppose we had an apparatus capable of detecting flashes in a system. Think of the wave function of system and apparatus together as a function on configuration space $\RRR^{3N}$, and think of configurations as one would in Bohmian mechanics. Let $R_0$ be the set of those configurations $q$ such that in a Bohmian world in configuration $q$ the apparatus display reads ``no flash detected so far,'' and let $R_1$ be the set of those configurations in which the display reads ``one flash detected so far.'' Then $R_1$ is disjoint from $R_0$. Recall that a flash in the system leads to a change in the wave function of the form
\be\label{collapse1}
\psi\to\psi'=\frac{1}{\N} g_\sigma(q_i-x)^{1/2} \,\psi \,.
\ee
But such a change does not push the wave function from $R_0$ to $R_1$. That is, if $\psi$, as a function on $\RRR^{3N}$, is concentrated in the region $R_0$, then $\psi'$ as given by \eqref{collapse1} will not be concentrated in $R_1$; instead, it is still concentrated in (some subset of) $R_0$.

While the macroscopic equivalence class\footnote{Consider two patterns of flashes in space-time. We call them macroscopically equivalent if they look alike on the macroscopic level. While the notion of macroscopic equivalence is not precisely defined, it is roughly defined.} of the pattern of flashes is measurable, we are led to suspecting that the microscopic details of the pattern are not. In what sense exactly that is or is not the case will be discussed in this section.

Note first that the main theorem about POVMs does not directly exclude measurements of $F$, the pattern of flashes, in the way it directly excluded measurements of $m(x)$ or $\psi$. After all, the argument was that the probability distribution of $m(x)$ or $\psi$ does not depend quadratically on $\psi$. The probability distribution of $F$, in contrast, \emph{does} depend on $\psi$ in a quadratic way: There is a continuous POVM $G(f)$ on the space of all flash histories $f$ such that the probability density of $F$ is given by
\be
\PPP(F=f) = \scp{\psi}{G(f)|\psi}\,,
\ee
with $\psi$ the initial wave function \cite{Tum07,grw3A}. Of course, this fact does not imply that $F$ can be measured---and we are claiming that it cannot.

To approach the question whether one can detect an individual flash (or an individual collapse in GRWm), we begin with a simple example. 

\subsection{An Example in which $\psi$ is Known}

Suppose $\psi$ is the wave function of a single particle and a superposition of two wave packets with disjoint supports in space,
\be\label{superposition}
\psi = \tfrac{1}{\sqrt{2}} |\text{here}\rangle + \tfrac{1}{\sqrt{2}} |\text{there}\rangle\,,
\ee
as may result from a double-slit setup. (The reasoning that is to follow will also apply to a molecule or small solid body with $|\text{here}\rangle$ and $|\text{there}\rangle$ differing by a shift in the center-of-mass coordinate that makes their supports disjoint.)
Suppose, for simplicity, that the Hamiltonian of the system vanishes, so that the Schr\"odinger time evolution is trivial. We ask whether any flash at all occurs during the time interval $[t_1,t_2]$. We are interested in the case in which the probability of a flash is neither close to 1 nor close to 0, a case that can be arranged by suitable choice of the duration $t_2-t_1$. (For a single particle, this choice might mean the duration is millions of years; we might either consider this case as a theoretical exercise, or consider instead the center-of-mass of a many-particle system to reduce the duration. To obtain a reasonable duration, we may want to consider the case that the number of particles is big (say, $>10^{10}$) but not too big (say, $<10^{20}$); anyway, that makes no difference to the theoretical analysis.) 

For simplicity, we ignore the possibility of multiple collapses and assume that a collapse occurs with probability $p$, and no collapse with probability $1-p$. Let us further assume that the two packets $|\text{here}\rangle$ and $|\text{there}\rangle$ have width less than $\sigma$ but separation greater than $\sigma$, so that after collapse the wave function is either approximately $|\text{here}\rangle$ or approximately $|\text{there}\rangle$. For simplicity, let us assume that after collapse the wave function is either exactly $|\text{here}\rangle$ or exactly $|\text{there}\rangle$, each with probability $1/2$. Thus the final wave function $\psi'$ is distributed according to
\begin{align}
\PPP(\psi'=\psi)&=1-p\,,\nonumber\\
\PPP(\psi'=|\text{here}\rangle)&=p/2\,\label{psiprimedistr}\\
\PPP(\psi'=|\text{there}\rangle)&=p/2\,.\nonumber
\end{align}
Let $C\in\{0,1\}$ be the random number of collapses.

We ask whether and how well an experiment beginning at time $t_2$ can decide whether a collapse has occurred. This amounts essentially to distinguishing between the three vectors $|\text{here}\rangle$, $|\text{there}\rangle$, and $\psi$. While $|\text{here}\rangle$ and $|\text{there}\rangle$ are orthogonal, $\psi$ is not orthogonal to either. As is well known, it is not possible to distinguish reliably between non-orthogonal vectors. (Our question can also be regarded as a special case of the question how well an experiment can distinguish between two density matrices $\rho_1$, $\rho_2$; see Section~\ref{sec:rho1rho2}.)

The following experiment $\E_1$ provides probabilistic information about $C$: carry out a ``quantum measurement of the observable'' $E_1$ given by the projection to the 1-dimensional subspace orthogonal to that spanned by \eqref{superposition},
\be\label{E1def}
E_1=I-\pr{\psi}
\ee
with $I$ the identity operator. 
If the result $Z$ was 1, then it can be concluded that a collapse has occurred, $C=1$. (Because if no collapse has occurred, then $\psi'=\psi$ and $Z=0$ with probability 1.) If the result $Z$ was 0, nothing can be concluded with certainty (since also $|\text{here}\rangle$ and $|\text{there}\rangle$ lead to a probability of $1/2$ for the outcome to be 0). However, in this case the (Bayesian) conditional probability that a collapse has occurred is less than $p$ (and thus $Z$ is informative about $C$):
\begin{align}
  \PPP(C=1|Z=0)&=\frac{\PPP(C=1,Z=0)}{\PPP(C=1,Z=0)
  +\PPP(C=0,Z=0)}\\[4mm]
  &=\frac{\PPP(Z=0|C=1)\,\PPP(C=1)}{\PPP(Z=0|C=1)\,
  \PPP(C=1)+\PPP(Z=0|C=0)\,\PPP(C=0)}\\[3mm]
  &= \frac{\tfrac{1}{2}p}{\tfrac{1}{2}p + 1\cdot(1-p)} = \frac{p}{2-p}<p\,.
\end{align}
Thus, in every case the experiment can retrodict $C$ with greater precision than it could have been predicted a priori (i.e., than attributing probability $p$ to a collapse and $1-p$ to no collapse). 

To quantify the usefulness of the experiment, we define the \emph{reliability} $R(\E)$ of a yes-no experiment (or 1-0 experiment) $\E$ as the probability that it correctly retrodicts whether a collapse has occurred,
\begin{align}
R(\E) &= \PPP(Z=0,C=0)+\PPP(Z=1,C=1)\\[2mm]
&=\PPP(Z=0|C=0)\, \PPP(C=0) + \PPP(Z=1|C=1)\, \PPP(C=1)\,.
\end{align}
For the particular experiment $\E_1$ just described, we find that $\PPP(Z=0|C=0)=1$, $\PPP(C=0)=1-p$, $\PPP(Z=1|C=1)=\frac12$, and $\PPP(C=1)=p$, so
\be\label{RE1}
R(\E_1)=1-\frac{p}{2} \,.
\ee
(See Proposition~\ref{prop:reliability} in Appendix~\ref{sec:proofs} for a more general result.) 
The fact that this quantity is less than 1 means that this experiment cannot decide with certainty whether a collapse has occured.

\begin{prop}\label{prop:E1} \cite{CT12a}
For the initial wave function \eqref{superposition} and $0\leq p \leq 2/3$, no experiment at time $t_2$ can retrodict $C$ with greater reliability than the quantum measurement of $E_1=I-\pr{\psi}$:
\be\label{boundE1}
\forall \E \: \forall p\in [0,\tfrac{2}{3}]: \quad R(\E)\leq 1-\frac{p}{2}\,.
\ee
In particular, for $p\neq 0$ it is impossible to determine with reliability 1 whether a collapse has occurred or not.
\end{prop}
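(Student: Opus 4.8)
The plan is to reduce the problem to a statement about distinguishing quantum states and then optimize over all POVMs. By the main theorem about POVMs, any experiment $\E$ carried out at time $t_2$ is described by a POVM $(E_z)$ on $\Hilbert_{\sys}$, and since $\E$ is a yes-no experiment we may assume the value space is $\{0,1\}$, so the POVM consists of two positive operators $E_0, E_1$ with $E_0+E_1=I$. The reliability is then
\begin{align}
R(\E) &= \PPP(Z=0\mid C=0)\,\PPP(C=0) + \PPP(Z=1\mid C=1)\,\PPP(C=1)\nonumber\\
&= (1-p)\,\scp{\psi}{E_0|\psi} + \tfrac{p}{2}\,\scp{\text{here}}{E_1|\text{here}} + \tfrac{p}{2}\,\scp{\text{there}}{E_1|\text{there}}\,,\nonumber
\end{align}
using the post-collapse distribution \eqref{psiprimedistr} and the fact that $\PPP(Z=1\mid C=1)$ is the average of the two conditional yes-probabilities. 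The first step is therefore to rewrite this as $R(\E) = \scp{\psi}{E_0|\psi} + \tr\bigl( E_1\, A \bigr)$ with $A = \tfrac{p}{2}\pr{\text{here}} + \tfrac{p}{2}\pr{\text{there}} - (1-p)\pr{\psi}$, i.e.\ $R(\E) = 1 - \scp{\psi}{E_1|\psi} + \tr(E_1 A) = 1 + \tr\bigl(E_1 (A - \pr{\psi})\bigr)$. Wait — more cleanly: since $E_0 = I - E_1$, $R(\E) = (1-p) + \tr\bigl(E_1 M\bigr)$ where $M = \tfrac{p}{2}\pr{\text{here}} + \tfrac{p}{2}\pr{\text{there}} - (1-p)\pr{\psi}$.

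The second step is the optimization. Since $0 \le E_1 \le I$, the maximum of $\tr(E_1 M)$ over all admissible $E_1$ is $\sum_{\lambda_j > 0}\lambda_j$, the sum of the positive eigenvalues of the Hermitian operator $M$, attained by taking $E_1$ to be the projection onto the span of the positive-eigenvalue eigenvectors. So I would diagonalize $M$. The relevant subspace is two-dimensional (spanned by $|\text{here}\rangle$ and $|\text{there}\rangle$, which also contains $\psi$), so this is a $2\times 2$ eigenvalue computation. Writing everything in the orthonormal basis $\{|\text{here}\rangle, |\text{there}\rangle\}$, with $\psi = \tfrac{1}{\sqrt 2}(|\text{here}\rangle + |\text{there}\rangle)$, one gets
\[
M = \frac{p}{2}\begin{pmatrix}1 & 0\\ 0 & 1\end{pmatrix} - \frac{1-p}{2}\begin{pmatrix}1 & 1\\ 1 & 1\end{pmatrix} = \frac{1}{2}\begin{pmatrix} 2p-1 & -(1-p)\\ -(1-p) & 2p-1\end{pmatrix}\,,
\]
whose eigenvalues are $\tfrac{1}{2}\bigl((2p-1) \pm (1-p)\bigr)$, i.e.\ $\lambda_+ = \tfrac{1}{2}(p)$ wait let me recompute: $\tfrac12((2p-1)+(1-p)) = \tfrac12 p$ and $\tfrac12((2p-1)-(1-p)) = \tfrac12(3p-2)$. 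Hence the sum of positive eigenvalues is $\tfrac{p}{2}$ when $3p-2 \le 0$, i.e.\ exactly when $p \le 2/3$ (for $p>2/3$ both eigenvalues are positive and the bound changes). Therefore for $p \in [0, 2/3]$,
\[
R(\E) \le (1-p) + \frac{p}{2} = 1 - \frac{p}{2}\,,
\]
which is \eqref{boundE1}; and this is achieved by $\E_1$ since its $E_1 = I - \pr{\psi}$ restricted to the two-dimensional subspace is exactly the projection onto the $\lambda_+$-eigenvector. The final clause, that reliability $1$ is impossible for $p\ne 0$, is then immediate since $1 - p/2 < 1$.

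The main obstacle, and the only genuinely non-trivial point, is the reduction to a finite-dimensional optimization: one must argue that it suffices to consider the POVM element $E_1$ restricted to the span of $\{|\text{here}\rangle, |\text{there}\rangle\}$, and that components of $E_1$ on the orthogonal complement cannot help. This follows because $M$ is supported on that two-dimensional subspace — all three of $\pr{\text{here}}$, $\pr{\text{there}}$, $\pr{\psi}$ annihilate its orthogonal complement — so $\tr(E_1 M) = \tr(P E_1 P \, M)$ where $P$ is the projection onto the subspace, and $P E_1 P$ ranges over exactly the operators between $0$ and $P$ on that subspace as $E_1$ ranges over admissible POVM elements. One should also note the dependence on the parameter regime: the threshold $p = 2/3$ arises precisely as the point where the smaller eigenvalue $\tfrac12(3p-2)$ changes sign, which is why the clean bound $1 - p/2$ is only claimed for $p \le 2/3$. (For comparison with the cited earlier result this matches Proposition~\ref{prop:reliability} in the appendix.) Beyond that, invoking the main theorem about POVMs at the very start is what licenses the whole POVM formulation for an arbitrary GRWf or GRWm experiment, and I would cite it explicitly.
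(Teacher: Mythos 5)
Your proposal is correct and follows essentially the same route the paper takes (via \cite{CT12a} and its own Section~\ref{sec:rho1rho2}): reduce to a two-outcome POVM by the main theorem about POVMs, write $R(\E)=1-p+\tr\bigl(E_1(p\rho_1-(1-p)\rho_2)\bigr)$ with $\rho_1=\diag\pr{\psi}$, $\rho_2=\pr{\psi}$, and maximize via the Helstrom bound (Proposition~\ref{prop:optimizer}), which amounts to the $2\times 2$ eigenvalue computation you carry out, with the threshold $p=2/3$ appearing exactly where the second eigenvalue $\tfrac{1}{2}(3p-2)$ changes sign. The computation and the reduction to the two-dimensional subspace are both correct.
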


The proof \cite{CT12a} of this proposition relies, of course, on the main theorem about POVMs, which associates with every yes-no experiment (acting on a system with wave functions that are superpositions of $|\text{here}\rangle$ and $|\text{there}\rangle$) a positive semi-definite $2\times 2$ matrix $E$ (namely, $E=E_{\yes}$, while $E_{\no}=I-E_{\yes}$). The proof shows that there is no such matrix $E$ for which $R$ exceeds $1-p/2$. We note that $R(\E)$ depends on $\E$ only through $E$, that is, two experiments with the same POVM have the same reliability (see Proposition~\ref{prop:reliability} in Appendix~\ref{sec:proofs}).

Proposition~\ref{prop:E1} expresses a limitation to knowledge: Although we can empirically gain \emph{some} information about the value of $C$, we cannot obtain full information, i.e., we cannot measure $C$ with certainty. In fact, not even close to certainty: While the maximal reliability $1-p/2$ is close to 1 if $p$ is small, in this case it is easy to guess correctly without any experiment whether a collapse occurred: no. In other words, the reliability $1-p/2$ can be put into perspective by comparing it to that of \emph{blind guessing}, i.e., of the experiment $\E_\emptyset$ that does not even interact with the system but always answers ``no'' if $p\leq 1/2$ and always ``yes'' if $p>1/2$. (We think of $p$ as known.) This experiment (which corresponds to $E=0$ if $p\leq 1/2$ and to $E=I$ if $p>1/2$) has reliability
\be\label{RE0}
R(\E_\emptyset)=\max\{p,1-p\}\,.
\ee
For $0\leq p \leq 1/2$, the optimal reliability is right in the middle between the trivial achievement $R=1-p$ and the desired achievement $R=1$; for $1/2\leq p \leq 2/3$, it is even closer to the trivial achievement $R=p$. For $p>2/3$, the situation is even worse:

\begin{prop}\label{prop:E0} \cite{CT12a}
For $\psi$ as in \eqref{superposition} and $2/3\leq p \leq 1$, no experiment at time $t_2$ can retrodict $C$ with greater reliability than blind guessing:
\be\label{boundE0}
\forall \E \: \forall p\in [\tfrac{2}{3},1]: \quad R(\E)\leq p\,.
\ee
\end{prop}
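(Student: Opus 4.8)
The plan is to use the main theorem about POVMs to turn the (infinite‑dimensional, hard‑to‑survey) quantification over all conceivable experiments $\E$ into a two‑dimensional linear optimization that can be solved by inspection. By that theorem, a yes--no experiment $\E$ carried out at time $t_2$ is described by a POVM $(E_{\yes},E_{\no})$ with $E_{\no}=I-E_{\yes}$ on $\Hilbert_\sys$; write $E=E_{\yes}$, so $0\le E\le I$. All wave functions that occur in \eqref{psiprimedistr} lie in $\Hilbert_0:=\mathrm{span}\{|\text{here}\rangle,|\text{there}\rangle\}$, so only the compression $P_0EP_0$ of $E$ to the two‑dimensional space $\Hilbert_0$ is relevant; this is again a positive operator bounded by the identity on $\Hilbert_0$, and by Proposition~\ref{prop:reliability} the reliability $R(\E)$ depends on $\E$ only through it. Hence it suffices to show that $R\le p$ for every positive $2\times2$ matrix $E$ with $E\le I$.

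Next I would compute $R$ explicitly in terms of $E$. Conditioning on $C$ and using \eqref{psiprimedistr}: given $C=0$ the relevant wave function is $\psi$, so $\PPP(Z=\no\mid C=0)=\scp{\psi}{(I-E)|\psi}$; given $C=1$ it is $|\text{here}\rangle$ or $|\text{there}\rangle$ with probability $\tfrac{1}{2}$ each, so $\PPP(Z=\yes\mid C=1)=\tfrac{1}{2}\bigl(\scp{\text{here}}{E|\text{here}}+\scp{\text{there}}{E|\text{there}}\bigr)=\tfrac{1}{2}\tr E$, the trace taken over $\Hilbert_0$ (orthonormality of $|\text{here}\rangle$ and $|\text{there}\rangle$ enters here). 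Thus $R(\E)=(1-p)\scp{\psi}{(I-E)|\psi}+\tfrac{p}{2}\tr E$. Introducing the unit vector $\psi^\perp=\tfrac{1}{\sqrt{2}}\bigl(|\text{here}\rangle-|\text{there}\rangle\bigr)$ orthogonal to $\psi$, so that $\tr E=\scp{\psi}{E|\psi}+\scp{\psi^\perp}{E|\psi^\perp}$, and setting $x=\scp{\psi}{E|\psi}$ and $y=\scp{\psi^\perp}{E|\psi^\perp}$, the constraint $0\le E\le I$ forces $x,y\in[0,1]$ and imposes no further relation between them (take $E$ diagonal in the orthonormal basis $\{\psi,\psi^\perp\}$ with zero off‑diagonal entry). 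In these variables $R(\E)=(1-p)(1-x)+\tfrac{p}{2}(x+y)=(1-p)+\tfrac{3p-2}{2}\,x+\tfrac{p}{2}\,y$, an affine function of $(x,y)$ on the square $[0,1]^2$.

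Finally I would optimize. For $p\ge 2/3$ both coefficients $\tfrac{3p-2}{2}$ and $\tfrac{p}{2}$ are nonnegative, so the maximum of $R(\E)$ over $[0,1]^2$ is attained at $x=y=1$, i.e.\ at $E=I$ --- the blind guess $\E_\emptyset$ answering ``yes'' --- with value $(1-p)+\tfrac{3p-2}{2}+\tfrac{p}{2}=p$. Hence $R(\E)\le p$ for every $\E$, which is \eqref{boundE0}. (As a consistency check, the same formula for $p\le 2/3$, where the $x$‑coefficient is negative, has its optimum at $x=0$, $y=1$, that is at $E=I-\pr{\psi}=E_1$, reproducing the bound $R(\E)\le 1-p/2$ of Proposition~\ref{prop:E1}.)

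The step I expect to be the main obstacle is not the optimization but the reduction in the first paragraph: one must be confident that ``every experiment at time $t_2$'' --- with an arbitrary apparatus, an arbitrary entangling interaction, and arbitrary post‑processing --- is genuinely captured by a single positive operator $E\le I$ on $\Hilbert_0$, that $R(\E)$ is a function of this $E$ alone rather than of the physical procedure, and that $\PPP(Z=z\mid C=c)$ is controlled solely by the post‑collapse wave function recorded in \eqref{psiprimedistr}. These are precisely the inputs furnished by the main theorem about POVMs together with Proposition~\ref{prop:reliability}; granting them, what remains is the elementary linear program above, whose triviality rests on the observation that $x$ and $y$ may be varied independently over $[0,1]$, forcing the maximum to a vertex of the square.
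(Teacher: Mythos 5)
Your proof is correct and follows essentially the same route the paper indicates (and that \cite{CT12a} carries out): the main theorem about POVMs plus Proposition~\ref{prop:reliability} reduce everything to a positive $2\times 2$ matrix $E\leq I$, and the reliability $R=(1-p)+\tfrac{3p-2}{2}\scp{\psi}{E|\psi}+\tfrac{p}{2}\scp{\psi^\perp}{E|\psi^\perp}$ is then maximized at $E=I$ for $p\geq 2/3$, giving $R\leq p$. The reduction you flag as the main obstacle is exactly what the paper supplies via \eqref{mainthmPOVM} and Proposition~\ref{prop:reliability}, so nothing is missing.
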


See Figure~\ref{figone}.

\begin{figure}[h]
\begin{center}
\includegraphics[width=.5 \textwidth]{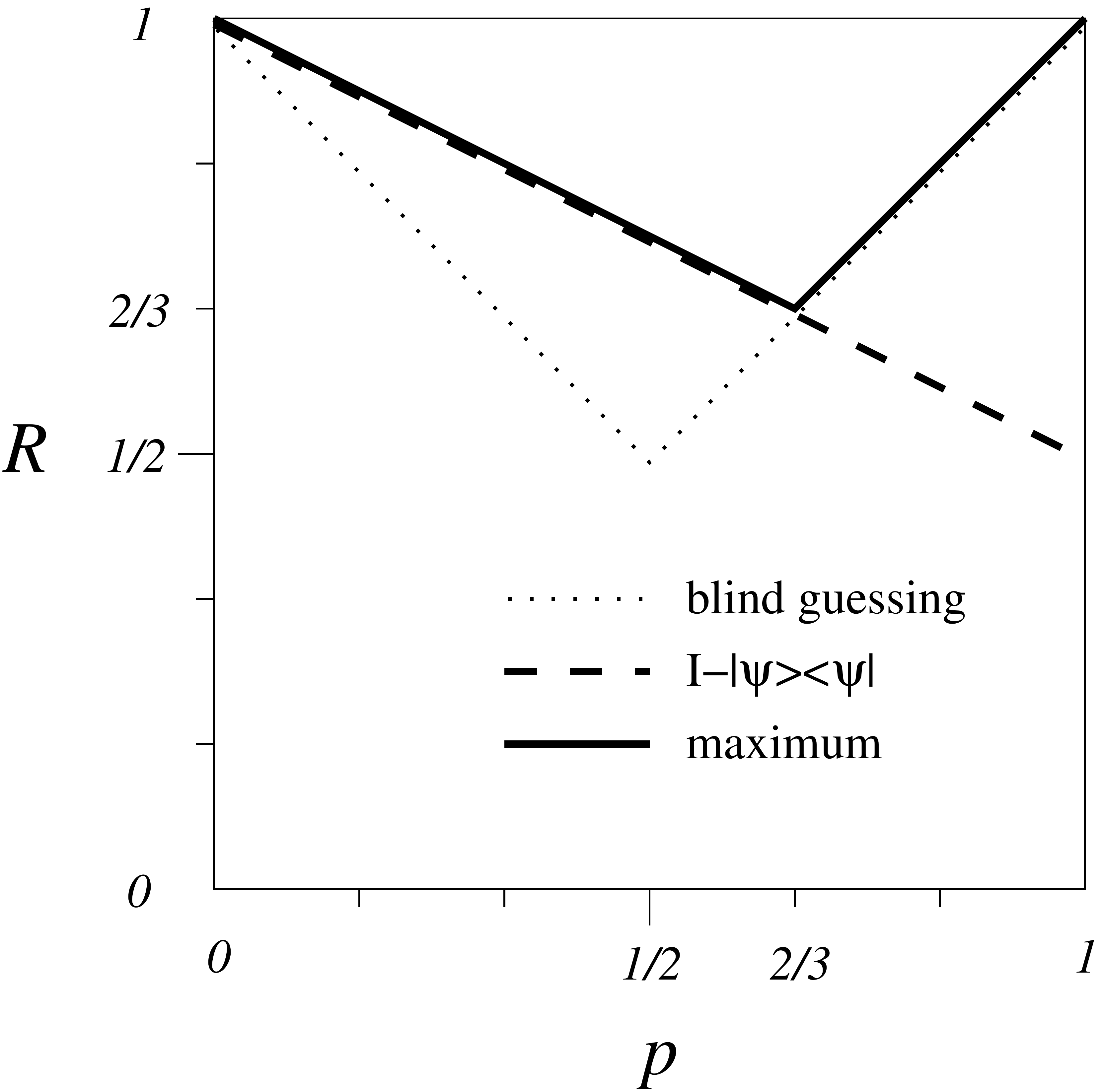}
\end{center}
\caption{
Reliability of experiments for detecting a collapse of the wave function \eqref{superposition} to one of the two contributions, as a function of $p$, the probability of collapse. Graphs are shown for blind guessing, for the quantum measurement of $E_1=I-\pr{\psi}$, and for the maximal value of any experiment.}
\label{figone}
\end{figure}

One can understand easily why the optimal strategy changes at $p=2/3$. Suppose we carry out $\E_1$ (corresponding to $I-\pr{\psi}$) and obtain outcome $Z$; as we saw above, if $Z=1$ then a collapse must have occurred, and if $Z=0$ then the probability that a collapse occurred is $p/(2-p)$; this value is $<1/2$ for $p<2/3$ and $>1/2$ for $p>2/3$. Thus, for $p<2/3$ the best retrodiction in case $Z=0$ is that no collapse occurred, whereas for $p>2/3$ it is better to always answer ``yes'' in both cases, $Z=1$ and $Z=0$.

\subsection{Other Choices of $\psi$}
\label{sec:otherchoices}

The fact that knowledge is limited does not depend on whether the superposition \eqref{superposition} involved two or more contributions, nor on whether the coefficients of the contributions were equal. Consider an arbitrary initial wave function $\psi$ from the unit sphere $\SSS(\Hilbert)$ of an $n$-dimensional Hilbert space $\Hilbert$ and an orthonormal basis $B=\{b_1,\ldots,b_n\}$ of $\Hilbert$. Suppose that collapse occurs with probability $p$ and projects $\psi$ to one of the basis vectors with the quantum probability; that is, the final state vector $\psi'$ has distribution
\begin{align}
\PPP\bigl(\psi'=\tfrac{\scp{b_k}{\psi}}{|\scp{b_k}{\psi}|}b_k\bigr) &= |\scp{b_k}{\psi}|^2\,p\,,\label{psiprimedef1}\\
\PPP(\psi'=\psi)&=1-p\,.\label{psiprimedef2}
\end{align}
Then the reliability, which depends on $\psi$ and on the experiment $\E$, is bounded by \cite{CT12a}
\be\label{boundRpn}
R_\psi(\E)\leq 1-\frac{p}{n} \quad \forall\psi\in\SSS(\Hilbert) \: \forall \E \: \forall p\in[0,\tfrac{n}{n+1}]\,.
\ee
Equality holds for $\psi=\sum_k n^{-1/2} b_k$ and $\E$ the quantum measurement of $E=I-\pr{\psi}$. For other $\psi$, the optimal reliability $R_\psi= \max_\E R_\psi(\E)$ is even less than $1-p/n$; for $\psi=2^{-1/2}b_1+2^{-1/2}b_2$ it is still $R_\psi=1-p/2$. Curiously, for generic $\psi$ the optimal experiment is different from the quantum measurement of $I-\pr{\psi}$; it is still of the form $I-\pr{\tilde\psi}$ but with $\tilde\psi\neq \psi$; we will give more detail around \eqref{Eopt} below. 

For $p>n/(n+1)$, no experiment is more reliable than blind guessing (for which \eqref{RE0} is still valid),
\be\label{boundRpnLarge}
R_\psi(\E)\leq p \quad \forall\psi\in\SSS(\Hilbert)\: \forall \E \: \forall p\in[\tfrac{n}{n+1},1]\,.
\ee

\subsection{Experiments Beginning Before $t_2$}

So far we have considered only experiments beginning at $t_2$. One might think that an apparatus might do better that interacts with the system during $[t_1,t_2]$, for example because it might seem easier to detect a flash when it happens than at a later time. However, as mentioned already at the end of Section~\ref{sec:main}, when trying to \emph{measure} the flashes in a system during $[t_1,t_2]$ we do not want to \emph{disturb} them; that is, we want them to occur in the same pattern as they would have occurred without intervention. Of course, in a stochastic theory it is not meaningful to ask after an intervention during $[t_1,t_2]$ whether the pattern of flashes in $[t_1,t_2]$ would have been the same had no intervention occurred. It is meaningful, however, to ask whether the \emph{distribution} of the flashes would have been the same had no intervention occurred. We now show that the distribution will in fact be changed by any informative intervention and conclude that an experiment interacting with the system before $t_2$ will necessarily disturb the flashes during $[t_1,t_2]$.

The key to showing that the probability distribution of the flashes after the intervention is different from what it would have been without intervention is that the wave function of the system gets changed by the intervention. Indeed, imagine an experiment $\E_\mathrm{nd}$ (nd stands for ``non-disturbing'') that, when applied to a system with a wave function $\psi'$ as in \eqref{psiprimedistr}, will return the system undisturbed, with the same wave function $\psi'$. Then $\E_\mathrm{nd}$ will not reveal whether a collapse has occurred or not (i.e., whether $\psi'=\psi$ or not), and in fact will yield no information at all about this question; that is, if $Z$ is the outcome of $\E_\mathrm{nd}$ and $C$ the number of collapses (i.e., $C=0$ if $\psi'=\psi$ and $C=1$ otherwise), then the conditional distribution of $Z$ given $C$ does not depend on $C$,
\be
\PPP(Z=z|C=0)=\PPP(Z=z|C=1)
\ee
for all $z$.\footnote{A slightly stronger statement is true: Suppose that $\E_\mathrm{nd}$, when applied to a system with a wave function $\psi'$ that is either $|\text{here}\rangle$ or $|\text{there}\rangle$ or $\psi$ as in \eqref{superposition}, will return the system undisturbed. Then $\E_\mathrm{nd}$ will not reveal which state $\psi'$ is, and in fact will yield no information at all about $\psi'$; that is, the conditional distribution of $Z$, given $\psi'$, does not depend on what $\psi'$ is,
$\PPP\bigl(Z=z\big|\psi'=|\text{here}\rangle\bigr)
=\PPP\bigl(Z=z\big|\psi'=|\text{there}\rangle\bigr) 
= \PPP\bigl( Z=z \big| \psi'=\psi\bigr)$. This follows from standard quantum measurement theory.}

Indeed, consider applying $\E_\mathrm{nd}$ at time $t_2$ repeatedly, $N$ times, to a system with wave function $\psi'$ as in \eqref{psiprimedistr} without giving it the possibility to collapse in between. For $c\in\{0,1\}$, let $P_{z,c}=\PPP(Z=z|C=c)$ with $Z$ the outcome of a single run of $\E_\mathrm{nd}$ on a system in state $\psi'$ as in \eqref{psiprimedistr}. Since the outcomes of the repeated runs of $\E_\mathrm{nd}$ are stochastically independent, the number of $z$-occurrences will, conditionally on $C=c$, have a binomial distribution with parameters $N$ and $P_{z,c}$, so that $P_{z,C}$ can be read off from the number of $z$-occurrences with reliability arbitrarily close to 1, provided that $N$ is sufficiently large. If, for any $z$, $P_{z,1}\neq P_{z,0}$ then the value of $C$ could be read off from that of $P_{z,C}$, so $C$ could be determined as reliably as desired, which contradicts the bound of Proposition~\ref{prop:E1}. Thus, $P_{z,1}=P_{z,0}$.

\subsection{If $\psi$ is Random}
\label{sec:randompsi}

In the previous subsections, we assumed that the initial (i.e., pre-collapse) wave function $\psi$ is known. Now assume that $\psi$ is not known, but random with known probability distribution $\mu$ over the unit sphere $\SSS(\Hilbert)$. So we have limited information about $\psi$. Then the reliability of a yes-no experiment $\E$, i.e., the probability that the experiment correctly answers whether a collapse has occurred, is easily found to be
\be
R_\mu(\E) = \EEE_\mu \, R_\psi(\E) = \int\limits_{\SSS(\Hilbert)} \mu(d\psi) \, R_\psi(\E)\,.
\ee
From \eqref{boundRpn} and \eqref{boundRpnLarge}, we immediately have that for all $\mu$ and all $\E$,
\be
R_\mu(\E) \leq 
 \begin{cases} 
   1-\frac{p}{n} & \text{if }0\leq p \leq \tfrac{n}{n+1}\\ 
   p & \text{if }\tfrac{n}{n+1} \leq p \leq 1\,.
 \end{cases}
\ee
We also note that $R_\mu(\E)$ depends on $\mu$ only through $\rho_\mu$; that is, two distributions with the same density matrix lead to the same reliability for each experiment, $R_\mu(\E)=R_\rho(\E)$ for $\rho=\rho_\mu$ (see Proposition~\ref{prop:reliability} in Appendix~\ref{sec:proofs}).\footnote{In fact, the same value $R_\rho(\E)$ of reliability applies also when the density matrix $\rho$ arises not from a mixture with distribution $\mu$ but as the reduced density matrix when the system $S$ under consideration is entangled with another system $T$, i.e., $\rho = \tr_T \pr{\psi}$ with $\psi$ the pure state of $S$ and $T$ together \cite{CT12a}.}

An extreme case is the uniform distribution $u$ over $\SSS(\Hilbert)$, or any distribution $\mu$ with density matrix $\rho_\mu=\rho_u=\frac{1}{n}I$. For this case we obtain a stronger limitation:

\begin{prop}\label{prop:uniform}
Let $u$ be the uniform distribution on $\SSS(\Hilbert)$. Then 
\be
R_u(\E)\leq \max\{p,1-p\}
\ee
for every experiment $\E$ at time $t_2$. That is, no experiment is more reliable than blind guessing. If $p=\frac{1}{2}$ then, in fact, $R_u(\E)=1/2$ for every experiment $\E$ at time $t_2$.
\end{prop}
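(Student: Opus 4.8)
The plan is to apply the main theorem about POVMs to reduce an arbitrary experiment to a single positive operator, and then to exploit that $\rho_u=\tfrac1n I$ is a multiple of the identity, which forces the reliability to depend on that operator only through its trace.

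First I would invoke the main theorem about POVMs together with Proposition~\ref{prop:reliability}: any yes-no experiment $\E$ at time $t_2$ on a system with wave function in $\SSS(\Hilbert)$ is characterized, for the purpose of computing its reliability, by a single operator $E=E_\yes$ with $0\le E\le I$ and $E_\no=I-E$, so that the outcome ``yes'' occurs with probability $\scp{\psi'}{E|\psi'}$ when the post-collapse state is $\psi'$. (A multi-outcome experiment followed by a binary decision rule is again a yes-no experiment, so nothing is lost by restricting to two-outcome POVMs.)

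Next I would compute $R_u(\E)=\PPP(Z=\no,C=0)+\PPP(Z=\yes,C=1)$ by conditioning on the number of collapses $C$. On $\{C=0\}$ the post-collapse state equals $\psi$ with $\psi\sim u$, so
\[
\PPP(Z=\no,C=0)=(1-p)\,\EEE_u\scp{\psi}{(I-E)|\psi}=(1-p)\,\tr\bigl(\rho_u(I-E)\bigr)=(1-p)\Bigl(1-\tfrac1n\tr E\Bigr),
\]
using $\rho_u=\tfrac1n I$. On $\{C=1\}$, by \eqref{psiprimedef1} the post-collapse state is $b_k$ (up to a phase) with probability $|\scp{b_k}{\psi}|^2$, hence $\PPP(Z=\yes\mid\psi,C=1)=\sum_k|\scp{b_k}{\psi}|^2\,\scp{b_k}{E|b_k}$; averaging over $\psi\sim u$ and using $\EEE_u|\scp{b_k}{\psi}|^2=\scp{b_k}{\rho_u|b_k}=\tfrac1n$ gives
\[
\PPP(Z=\yes,C=1)=p\sum_{k}\tfrac1n\,\scp{b_k}{E|b_k}=\tfrac{p}{n}\tr E.
\]

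Finally I would add the two contributions. Writing $t=\tfrac1n\tr E\in[0,1]$ (the bounds coming from $0\le E\le I$),
\[
R_u(\E)=(1-p)(1-t)+pt=(1-p)+(2p-1)\,t,
\]
which is affine in $t$ on $[0,1]$; its maximum is $1-p$ when $p\le\tfrac12$ (attained at $t=0$) and $p$ when $p\ge\tfrac12$ (attained at $t=1$), so $R_u(\E)\le\max\{p,1-p\}$ in all cases, and for $p=\tfrac12$ the coefficient $2p-1$ vanishes so that $R_u(\E)=\tfrac12$ for every $\E$. I do not expect a genuine obstacle: the only delicate points are the reduction of a general experiment to the operator $E$ (handled by the main theorem about POVMs and Proposition~\ref{prop:reliability}) and taking the $u$-average over the \emph{pre}-collapse $\psi$ in the right order; once these are in place the whole statement reduces to the observation that $\rho_u$ is proportional to the identity.
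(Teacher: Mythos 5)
Your proof is correct and follows essentially the same route as the paper's: both reduce the experiment to a single operator $E$ with $0\le E\le I$ via the main theorem about POVMs, exploit $\rho_u=\tfrac1n I$ so that the reliability becomes the affine function $1-p-\tfrac{1-2p}{n}\tr E$ of $\tr E\in[0,n]$, and then optimize over that interval. The only cosmetic difference is that you compute the two joint probabilities directly (using $\tr(\diag E)=\tr E$) rather than routing through the operator $A=p\diag E+(1-p)(I-E)$ of Proposition~\ref{prop:reliability}, which amounts to the same identity.
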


We give the proof in Appendix~\ref{sec:proofs}.

Proposition~\ref{prop:uniform} expresses a severe limitation to knowledge: If $\psi$ is uniformly distributed then the reliability of any experiment is no greater than that of blind guessing. Here is an even stronger statement: If $\psi$ is uniformly distributed then no experiment can convey any information at all about whether or not a collapse has occurred. More precisely, no experiment on the system can produce an outcome $Z$ such that the conditional distribution $\PPP(C|Z)$ would be any different from the a priori distribution $(p, 1-p)$:
 
\begin{prop}\label{prop:noinf}
Consider the collapse process as in \eqref{psiprimedef1}--\eqref{psiprimedef2} and an arbitrary experiment $\E$ at time $t_2$, possibly with more than two possible outcomes. Let $\psi$ be uniformly distributed on $\SSS(\Hilbert)$. Then $\PPP(C=1|Z)=p$ and $\PPP(C=0|Z)=1-p$.
\end{prop}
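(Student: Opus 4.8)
The plan is to compute the joint distribution of $(C,Z)$ by conditioning on the pre-collapse wave function $\psi$ and then on $\psi'$, using the main theorem about POVMs to describe $Z$, and then to show that after integrating over the uniform distribution $u$ the $Z$-marginal factorizes from $C$. First I would fix the experiment $\E$ and invoke the main theorem about POVMs to obtain a POVM $(E_z)_z$ on $\Hilbert_\sys$ such that, conditionally on the post-collapse state being $\psi'$, the outcome satisfies $\PPP(Z=z\,|\,\psi'=\varphi)=\scp{\varphi}{E_z|\varphi}$. Combining this with the collapse law \eqref{psiprimedef1}--\eqref{psiprimedef2}, I get, conditionally on the pre-collapse wave function $\psi$,
\be
\PPP(Z=z,\,C=0\,|\,\psi) = (1-p)\,\scp{\psi}{E_z|\psi}\,,\qquad
\PPP(Z=z,\,C=1\,|\,\psi) = p\sum_{k=1}^n |\scp{b_k}{\psi}|^2\,\scp{b_k}{E_z|b_k}\,.
\ee
Now I average over $\psi\sim u$. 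The $C=0$ term gives $(1-p)\,\tr\!\bigl(E_z\,\rho_u\bigr)=(1-p)\tfrac1n\tr E_z$, since $\EEE_u\pr{\psi}=\rho_u=\tfrac1n I$. For the $C=1$ term I need $\EEE_u |\scp{b_k}{\psi}|^2$, which is again $\tfrac1n$ by the same identity (it is the $(k,k)$ entry of $\rho_u$); hence the $C=1$ contribution becomes $p\sum_k \tfrac1n \scp{b_k}{E_z|b_k}=p\tfrac1n\tr E_z$. Therefore $\PPP(Z=z,C=0)=(1-p)\tfrac1n\tr E_z$ and $\PPP(Z=z,C=1)=p\tfrac1n\tr E_z$, so $\PPP(Z=z,C=c)=\PPP(C=c)\cdot\tfrac1n\tr E_z=\PPP(C=c)\,\PPP(Z=z)$, i.e.\ $C$ and $Z$ are independent. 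Bayes' rule then yields $\PPP(C=1|Z)=p$ and $\PPP(C=0|Z)=1-p$ wherever $\PPP(Z=z)\neq 0$, which is exactly the claim.

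The one genuinely nontrivial input, beyond the bookkeeping above, is the applicability of the main theorem about POVMs to this scenario. The subtlety is that $Z$ is the outcome of an experiment carried out at time $t_2$ on a system whose state at $t_2$ is itself a \emph{random} vector $\psi'$ (distributed per \eqref{psiprimedef1}--\eqref{psiprimedef2}); one must check that the theorem legitimately gives a single POVM $(E_z)$, depending only on $\E$ and not on which of the possible $\psi'$'s occurred, such that $\PPP(Z=z|\psi')=\scp{\psi'}{E_z|\psi'}$ for \emph{each} possible value $\psi'$. This is exactly the content of the theorem as stated in Section~\ref{sec:main} (it holds for GRWf, GRWm, Bohmian mechanics, and orthodox QM), and the same point was used implicitly in the proofs of Propositions~\ref{prop:E1} and \ref{prop:E0}; I would simply cite it. A second, minor point to note is that, since $\E$ may begin before $t_2$, one should read $\psi'$ as the system's wave function at the moment $\E$ begins interacting with it and absorb any trivial Schr\"odinger evolution up to that moment into the operators $E_z$ — this changes nothing in the argument because such evolution is unitary and deterministic.

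I expect the main obstacle to be purely expository: stating carefully what "an experiment $\E$ at time $t_2$" is allowed to do (in particular allowing more than two outcomes, as the proposition explicitly does) so that the single-POVM representation is unambiguous, and making sure the two expectation computations $\EEE_u\pr{\psi}=\tfrac1n I$ and $\EEE_u|\scp{b_k}{\psi}|^2=\tfrac1n$ are justified by the rotational invariance of $u$ (the first is standard; the second is its diagonal entry). Everything after that is a two-line computation. It may be worth remarking, as a sanity check and a sharpening, that the argument in fact shows the full joint law factorizes, $\PPP(Z=z,C=c)=\PPP(Z=z)\PPP(C=c)$, so that not merely $\PPP(C|Z)=\PPP(C)$ but also $\PPP(Z|C)=\PPP(Z)$ — i.e.\ the experiment and the collapse are statistically independent — which is the precise sense in which "no experiment can convey any information at all about whether a collapse has occurred." This also makes transparent why Proposition~\ref{prop:uniform} follows as a corollary: with $C$ and $Z$ independent, the best retrodiction of $C$ from $Z$ is the best blind guess.
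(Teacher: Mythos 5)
Your proposal is correct and follows essentially the same route as the paper's proof: both invoke the main theorem about POVMs to write $\PPP(Z=z|C=0)=\tr(\rho_u E_z)$ and $\PPP(Z=z|C=1)=\tr(\rho_u\diag E_z)$, observe that these coincide because $\rho_u=\tfrac1n I$ (so both equal $\tfrac1n\tr E_z$), and conclude via Bayes' rule. Your added remark that the joint law factorizes, so $C$ and $Z$ are fully independent, is a correct and mild sharpening of the same computation rather than a different argument.
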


\subsection{Optimal Way of Distinguishing Two Density Matrices}
\label{sec:rho1rho2}

The problem of distinguishing, at time $t_2$, whether the wave function is collapsed or not, can be regarded as a special case of the problem of distinguishing two probability distributions $\mu_1,\mu_2$ on $\SSS(\Hilbert)$, or of distinguishing two density matrices $\rho_1,\rho_2$. That is, suppose that a number $X\in\{1,2\}$ gets chosen randomly with $\PPP(X=1)=p$ and $\PPP(X=2)=1-p$; as a second step, we are given a system with a random wave function $\psi'$ chosen according to $\mu_X$. 
We can also suppose more directly that, as the second step in the experiment, we are given a system with density matrix $\rho_X$; this density matrix may arise from some distribution $\mu_X$, or from tracing out some other system, or both. 
Our previous scenario of \eqref{psiprimedef1}--\eqref{psiprimedef2} corresponds to the special case
\begin{align}\label{rho1}
\mu_1=\sum_k |\scp{b_k}{\psi}|^2 \delta_{c_kb_k}\,,
\quad\rho_1&=\diag \pr{\psi}\,,\\
\label{rho2}
\mu_2=\delta_\psi\,, \quad \rho_2&=|\psi\rangle\langle\psi|
\end{align}
with $c_k=\scp{b_k}{\psi}/|\scp{b_k}{\psi}|$ and
\be\label{diagdef}
\diag A = \sum_k \pr{b_k}A\pr{b_k}\,.
\ee

We now ask, how well can we retrodict $X$ from experiments on the system? Again, any experiment $\E$ with two possible outcomes, 1 and 2, is characterized by a self-adjoint operator $0\leq E\leq I$ (i.e., one with eigenvalues between 0 and 1), namely $E=E_1$ (so that $E_2=I-E$), and the usefulness of $\E$ for our purpose can be quantified by its reliability $R(\rho_1,\rho_2,\E)$, i.e., the probability that its outcome $Z$ agrees with $X$. Using that the main theorem about POVMs applies also to systems that have a density matrix, rather than a wave function, and then says that the outcome $Z$ has probability distribution
\be\label{mainthmPOVM2}
\PPP(Z=z)=\tr(\rho E)\,,
\ee
we find that
\begin{align}
R(\rho_1,\rho_2,\E) &= \PPP(Z=1,X=1) + \PPP(Z=2,X=2)\\
& = \PPP(Z=1|X=1)\PPP(X=1) + \PPP(Z=2|X=2)\PPP(X=2)\\
&= p \tr\bigl(E\rho_{1}\bigr) + (1-p) \tr\bigl((I-E)\rho_{2}\bigr)\\
&=1-p + \tr\Bigl( E\bigl( p\rho_{1} - (1-p) \rho_{2} \bigr)\Bigr)\,. \label{Rrho}
\end{align}
Note that the reliability depends on $\E$ only through $E$.

Which $E$ will maximize the reliability for given $\rho_1$, $\rho_2$, and $p$? 

\begin{prop}[Helstrom \cite{Hel76}]\label{prop:optimizer}
For a given self-adjoint operator $A$, $\tr(EA)$ is maximized among $E$s with $0\leq E\leq I$ by those and only those $E$ with
\be
P_+(A)\leq E \leq P_+(A)+P_0(A)\,,
\ee
where $P_+(A)$ is the projection to the subspace $\Hilbert_+(A)$ spanned by all eigenvectors of $A$ with positive eigenvalues, and $P_0(A)$ is the projection to the eigenspace $\Hilbert_0(A)$ of $A$ with eigenvalue 0. If 0 is not an eigenvalue of $A$, then $P_0(A)=0$, and the optimizer $E=P_+(A)$ is unique. The maximal value of $\tr(EA)$ is the sum of the positive eigenvalues of $A$ (with multiplicities). 
\end{prop}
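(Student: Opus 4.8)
The plan is to diagonalize $A$ and reduce the maximization of $\tr(EA)$ over $0\le E\le I$ to a simple coordinate-wise optimization. Write $A=\sum_j a_j \pr{e_j}$ in an orthonormal eigenbasis $\{e_j\}$, with eigenvalues $a_j\in\RRR$. For any admissible $E$ (self-adjoint, $0\le E\le I$) set $\varepsilon_j:=\scp{e_j}{E|e_j}$; then $0\le\varepsilon_j\le 1$ because $0\le E\le I$, and $\tr(EA)=\sum_j a_j\,\scp{e_j}{E|e_j}=\sum_j a_j\varepsilon_j$. Hence $\tr(EA)=\sum_j a_j\varepsilon_j\le \sum_{j:a_j>0} a_j$, since the best we can do is put $\varepsilon_j=1$ when $a_j>0$, $\varepsilon_j=0$ when $a_j<0$, and $\varepsilon_j$ arbitrary (it contributes nothing) when $a_j=0$. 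This already gives the claimed maximal value: the sum of the positive eigenvalues of $A$ counted with multiplicity.

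Next I would show that the bound is attained exactly by the operators $E$ with $P_+(A)\le E\le P_+(A)+P_0(A)$. Any such $E$ maps $\Hilbert_+(A)$ into itself acting as the identity, annihilates $\Hilbert_-(A)$ (the span of negative-eigenvalue eigenvectors), and is arbitrary (subject to $0\le\cdot\le I$) on $\Hilbert_0(A)$; one checks directly that $\tr(EA)=\sum_{j:a_j>0}a_j$ for each of these, so they are maximizers. For the converse — that these are the \emph{only} maximizers — I would argue that equality in $\sum_j a_j\varepsilon_j\le\sum_{a_j>0}a_j$ forces $\varepsilon_j=1$ whenever $a_j>0$ and $\varepsilon_j=0$ whenever $a_j<0$, i.e.\ $\scp{e_j}{E|e_j}=1$ resp.\ $0$ on those eigenvectors. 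The point is then that $0\le E\le I$ together with $\scp{v}{E|v}=1$ on a unit vector $v$ forces $E v=v$ (consider $\scp{v}{(I-E)|v}=0$ and $I-E\ge 0$, so $(I-E)^{1/2}v=0$), and similarly $\scp{v}{E|v}=0$ forces $Ev=0$. Applying this to every eigenvector in $\Hilbert_+(A)$ and $\Hilbert_-(A)$ shows $E$ acts as the identity on $\Hilbert_+(A)$ and as zero on $\Hilbert_-(A)$; combined with $0\le E\le I$ restricted to the invariant subspace $\Hilbert_0(A)$, this is exactly the asserted sandwich condition. The uniqueness statement when $0$ is not an eigenvalue is then the special case $P_0(A)=0$.

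The only genuine subtlety — the ``main obstacle'' — is the infinite-dimensional bookkeeping: for $\tr(EA)$ to be well defined one wants $A$ trace class (as it will be in the application, where $A=p\rho_1-(1-p)\rho_2$), and one should make sure the rearrangement $\tr(EA)=\sum_j a_j\varepsilon_j$ is legitimate and that the supremum is actually attained (not merely approached). Trace-class $A$ has summable eigenvalues, which makes the sum absolutely convergent and lets one pass freely between $\tr(EA)$ and $\sum_j a_j\varepsilon_j$ via the spectral decomposition; the maximizer $E=P_+(A)+(\text{anything on }\Hilbert_0(A))$ is a bona fide bounded operator, so attainment is not an issue. In the finite-dimensional case relevant to the detection problem of this section, none of this matters and the argument above is complete as stated. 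I would present the finite-dimensional version in the main text and relegate the trace-class remarks to a footnote or cite \cite{Hel76} for the general operator-theoretic statement.
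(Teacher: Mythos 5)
Your argument is correct and complete. One thing to be clear about: the paper does not actually prove Proposition~\ref{prop:optimizer} anywhere --- it is quoted from Helstrom \cite{Hel76} with a citation only --- so there is no in-paper proof to compare against, and what you have written supplies the standard argument that the citation stands in for. The diagonal-entry reduction $\tr(EA)=\sum_j a_j\scp{e_j}{Ee_j}$ with $0\le\scp{e_j}{Ee_j}\le 1$ gives the upper bound $\sum_{j:a_j>0}a_j$ at once, and your key lemma --- that $0\le E\le I$ and $\scp{v}{Ev}=1$ force $Ev=v$ (via $(I-E)^{1/2}v=0$), while $\scp{v}{Ev}=0$ forces $Ev=0$ --- correctly upgrades equality of the diagonal entries to $E$ acting as the identity on $\Hilbert_+(A)$ and as zero on $\Hilbert_-(A)$. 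The only step you leave implicit is why such an $E$ automatically leaves $\Hilbert_0(A)$ invariant, which is what makes ``identity on $\Hilbert_+(A)$, zero on $\Hilbert_-(A)$, anything between $0$ and $I$ on $\Hilbert_0(A)$'' literally equivalent to the sandwich condition $P_+(A)\le E\le P_+(A)+P_0(A)$: for $w\in\Hilbert_0(A)$ and $v\in\Hilbert_+(A)$ self-adjointness gives $\scp{v}{Ew}=\scp{Ev}{w}=\scp{v}{w}=0$, and similarly $Ew\perp\Hilbert_-(A)$, so $Ew\in\Hilbert_0(A)$; this one-line check is worth including. Your closing remarks on the infinite-dimensional bookkeeping are apt: in the paper's application $A=p\rho_1-(1-p)\rho_2$ is trace class, the eigenvalue sum converges absolutely, and the maximizer is a bounded operator, so attainment is not an issue.
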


In our case, the projection to $\Hilbert_+(p\rho_1 - (1-p)\rho_2)$ is an optimal choice of $E$, and the maximal reliability is $1-p$ plus the sum of all positive eigenvalues of $p\rho_1 - (1-p)\rho_2$; this reliability is $<1$ unless $\Hilbert_+(\rho_1)$ is orthogonal to $\Hilbert_+(\rho_2)$.

Coming back to the special case \eqref{rho1}--\eqref{rho2} with known $\psi$, this optimal $E$ operator and the maximal reliability can be specified explicitly. For the sake of completeness, we quote the formulas from \cite{CT12a}:
\be\label{Eopt}
E^{\mathrm{opt}}_{\psi,p} = \begin{cases}
I-\pr{\tilde\psi} & \text{if }0<p \leq \frac{n}{n+1}\,,\\
I & \text{if }\frac{n}{n+1}\leq p <1\,, \end{cases} 
\ee
with $\tilde\psi=M^{-1}\psi/\|M^{-1}\psi\|$, $M= z_{\psi,p} I +  \diag \pr{\psi}$, 
\be
z_{\psi,p} = f_\psi^{-1}\Bigl(\frac{p}{1-p}\Bigr)\geq 0\,, \quad
f_\psi(z) = \sum_{k=1}^n \frac{|\psi_k|^2}{z+|\psi_k|^2} \text{ for }z\geq 0\,,
\ee
and
\be
R^{\mathrm{opt}}_{\psi,p} = \begin{cases}  p(1+z_{\psi,p}) & \text{if } 0< p \leq \frac{n}{n+1}\,,\\
p & \text{if } \frac{n}{n+1}\leq p<1\,. \end{cases}
\ee

\subsection{If $\psi$ is Unknown}

Often it is desirable to have an experiment that works for unknown $\psi$. However, it is not obvious what it should mean for an experiment to work for unknown $\psi$. One approach, with a Bayesian flavor, would be to take this to mean that the experiment works (i.e., is more reliable than blind guessing) for random $\psi$ with uniform distribution. We have already discussed this scenario and can conclude that in this approach it is impossible for the inhabitants of a GRWm or GRWf world to measure the collapses.

However, it can be questioned whether an unknown $\psi$ can be assumed to be uniformly distributed. So here is another approach. Obviously, any experiment $\E$ that we choose will have high reliability for some $\psi$s and low reliability (lower than blind guessing) for other $\psi$s (as the average reliability over all $\psi$ equals that of blind guessing). We may wish to maximize the size of the set
\be
S_\E = \bigl\{ \psi\in\SSS(\Hilbert): R_\psi(\E)>\max\{p,1-p\} \bigr\}\,,
\ee
the set of $\psi$s for which $\E$ is more reliable than blind guessing. The natural measure of ``size'' is the (normalized) uniform distribution $u$ on $\SSS(\Hilbert)$. There do exist experiments for which $u(S_\E)>1/2$ \cite{CT12c}, but we have reason \cite{CT12c} to conjecture that $u(S_\E)\leq 1-1/e\approx 0.632$. If this is right, it is another curious limitation to knowledge: While you can do better than blind guessing on more than 50\%\ of all wave functions, you cannot do better than blind guessing on more than 64\%\ of all wave functions. 

Here are further results concerning $u(S_\E)$, expressing limitations to knowledge.

\begin{prop}\label{prop:uSE} \cite{CT12c}
For $p<1/2-1/\sqrt{8}\approx 0.146$ and $p>1/2+1/\sqrt{8}\approx 0.854$, all $\Hilbert$ and all $\E$, $u(S_\E)\leq 1/2$.
\end{prop}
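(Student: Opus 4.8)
The plan is to reduce the statement to a one‑sided tail bound for a weighted sum of i.i.d.\ exponential random variables, and then apply a Chernoff (moment‑generating‑function) estimate. Assume first $p<1/2$ (the case $p>1/2$ is parallel, see the end). For an experiment $\E$ with POVM element $E=E_{\yes}$, $0\le E\le I$, the main theorem about POVMs together with \eqref{Rrho} gives $R_\psi(\E)=1-p+\tr\!\big(E(p\,\rho_1^\psi-(1-p)\rho_2^\psi)\big)$ with $\rho_1^\psi=\diag\pr\psi$ and $\rho_2^\psi=\pr\psi$; since $\tr(E\,\diag\pr\psi)=\scp\psi{\diag(E)|\psi}$ this equals $1-p+\scp\psi{L|\psi}$, where
\be
L=p\,\diag(E)-(1-p)\,E .
\ee
Hence $S_\E=\{\psi:\scp\psi{L|\psi}>0\}$. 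The sign of $\scp\psi{L|\psi}$, and therefore $S_\E$, is unchanged when $E$ is multiplied by a positive constant, so (assuming $E\neq 0$, the case $E=0$ giving $S_\E=\emptyset$) we may normalise $\tr E=1$; we may also assume $\max_i\ell_i>0$, the $\ell_i$ being the eigenvalues of $L$, since otherwise $S_\E=\emptyset$. Finally, because $u$ is unitarily invariant, the squared moduli of the coordinates of $\psi$ in the eigenbasis of $L$ are uniform on the probability simplex, i.e.\ distributed like $(g_1,\dots,g_n)/\sum_j g_j$ with $g_i$ i.i.d.\ $\mathrm{Exp}(1)$; therefore $u(S_\E)=\PPP(\sum_i\ell_i g_i>0)$.

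Now apply Markov's inequality to $e^{\theta\sum_i\ell_i g_i}$ for an admissible $\theta\in(0,1/\max_i\ell_i)$: using $\EEE[e^{\theta\ell g}]=(1-\theta\ell)^{-1}$ for $g\sim\mathrm{Exp}(1)$ and independence,
\be
u(S_\E)\le\prod_i\frac{1}{1-\theta\ell_i}=\frac{1}{\det(I-\theta L)} .
\ee
I would take $\theta=\dfrac{1-2p}{2p(1-p)}$, which is admissible since $\max_i\ell_i\le p\,\max_k\scp{b_k}{E|b_k}\le p$ and $\theta p=\tfrac{1-2p}{2(1-p)}<1$. Writing $\kappa:=\theta p\in(0,\tfrac12)$, $s:=\theta(1-p)=\tfrac{1-2p}{2p}$, and $e_k:=\scp{b_k}{E|b_k}$, we have $I-\theta L=M+sE$ with $M:=I-\kappa\,\diag(E)>0$ (its $\{b_k\}$‑diagonal entries $1-\kappa e_k\ge 1-\kappa>\tfrac12$). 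Then $\det(I-\theta L)=\det(M)\,\det(I+sM^{-1}E)$; as $M^{-1}E$ is similar to $M^{-1/2}EM^{-1/2}\ge 0$ it has non‑negative spectrum, so $\det(I+sM^{-1}E)\ge 1+s\,\tr(M^{-1}E)=1+s\sum_k\tfrac{e_k}{1-\kappa e_k}\ge 1+s\sum_k e_k=1+s$ (using $\tr E=1$); and $\det(M)=\prod_k(1-\kappa e_k)\ge(1-\kappa)^{\sum_k e_k}=1-\kappa$ by concavity of $t\mapsto\log(1-\kappa t)$ on $[0,1]$. Hence $\det(I-\theta L)\ge(1-\kappa)(1+s)=\tfrac1{2(1-p)}\cdot\tfrac1{2p}$, i.e.
\be
u(S_\E)\le 4p(1-p)\qquad\text{for every }\E\text{ and every }\Hilbert .
\ee

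Since $4p(1-p)\le\tfrac12\iff 8p^2-8p+1\ge 0\iff p\le\tfrac12-\tfrac1{\sqrt8}\ \text{or}\ p\ge\tfrac12+\tfrac1{\sqrt8}$, the proposition follows (and the uniform bound $u(S_\E)\le 4p(1-p)$ is somewhat stronger). For $p>1/2$ one argues in the same way: here $S_\E=\{\psi:\scp\psi{L|\psi}>2p-1\}$, so $u(S_\E)=\PPP\big(\sum_i(\ell_i-(2p-1))g_i>0\big)\le\det\!\big(I-\theta(L-(2p-1)I)\big)^{-1}$, and the analogous choice of $\theta$—combined with the observation that $u(S_\E)=0$ unless $\tr E>(2p-1)/p$, and with $(1+\theta(2p-1))^{\,n-\tr E}\ge 1$—again yields $u(S_\E)\le 4p(1-p)$. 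The step that really requires care is the normalisation $\tr E=1$ performed before choosing $\theta$: without it, no single fixed $\theta$ suffices (the lower bound on $\det(I-\theta L)$ degenerates to $1$ as $\tr E\to 0$), and one is pushed into a much more delicate case analysis; it is the scale‑invariance of $S_\E$ that saves the argument. Everything else—admissibility of $\theta$, non‑negativity of the spectrum of $M^{-1}E$, the elementary concavity inequality—is routine bookkeeping.
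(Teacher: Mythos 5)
Your half of the argument for $p<1/2-1/\sqrt{8}$ is correct and complete: the identity $R_\psi(\E)=1-p+\scp{\psi}{L|\psi}$ with $L=p\,\diag(E)-(1-p)E$, the scale-invariance of $S_\E$ that licenses the normalization $\tr E=1$ (which, note, automatically restores $0\le E\le I$), the exponential representation of $u$, and the Chernoff/determinant estimate all check out, and they deliver the clean uniform bound $u(S_\E)\le 4p(1-p)$. Since the paper only cites \cite{CT12c} for this proposition and gives no proof of its own, I can only judge your route on its merits; for $p<1/2$ it stands.

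The half $p>1/2+1/\sqrt{8}$ is only sketched, and the sketch as written does not go through. Because the threshold is now $\scp{\psi}{L|\psi}>2p-1$, the set $S_\E$ is no longer scale-invariant, so $\tr E$ ranges over $((2p-1)/p,\,n]$ rather than being normalizable to $1$, and your proposed lower bound on $\det\bigl(I-\theta(L-(2p-1)I)\bigr)$ --- the analogue of $\det(M)\bigl(1+s\,\tr(M^{-1}E)\bigr)$ together with $\det(M)\ge\bigl(1+\theta(2p-1)\bigr)^{\,n-\tr E}(2p)^{-\tr E}$ --- collapses when $\tr E$ is large. Take $E=I-\pr{v}$ with $v=n^{-1/2}(1,\dots,1)$ and $n$ large: then $\tr E=n-1$, the factor $(2p)^{-\tr E}$ is exponentially small, and the linearization $\det(I+sM^{-1}E)\ge 1+s\,\tr(M^{-1}E)$ recovers only a factor linear in $n$, whereas the true determinant contains a compensating factor of order $(2p)^{\,n-1}$ coming from the $n-1$ unit eigenvalues of $E$. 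So your chain of inequalities bottoms out near $0$, far below the target $1/(4p(1-p))>1$ --- even though in this very example the Chernoff bound with the \emph{exact} determinant does succeed (one finds $\det\approx\tfrac{1}{2p}e^{(2p-1)/(2(1-p))}$). The observations you invoke ($u(S_\E)=0$ unless $\tr E>(2p-1)/p$, and $(1+\theta(2p-1))^{\,n-\tr E}\ge 1$) do not repair this. The $p>1/2$ case therefore needs a genuinely different estimate --- for instance keeping $\det(I+sM^{-1}E)=\prod_j(1+s\nu_j)$ as a product rather than linearizing it, or exploiting the containment $S_\E\subseteq\bigl\{\psi:\scp{\psi}{\tfrac{p}{1-p}(I-\diag E)+E|\psi}<1\bigr\}$ --- and cannot be dispatched as ``parallel'' to the first half.
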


That is, for $p$ close to 0 or 1, one cannot even do better than blind guessing for a majority of wave functions. 

\begin{prop} \cite{CT12c}
For $\Hilbert$ with $\dim\Hilbert=2$, all $0\leq p\leq 1$ and all $\E$, $u(S_\E)\leq 1/2$.
\end{prop}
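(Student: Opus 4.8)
The plan is to reduce the claim about $u(S_\E)$ for $\dim\Hilbert=2$ to a concrete optimization problem on the Bloch sphere. By the main theorem about POVMs, any yes-no experiment $\E$ is characterized by a single self-adjoint $2\times2$ matrix $E$ with $0\leq E\leq I$; since the reliability $R_\psi(\E)$ depends on $\E$ only through $E$, it suffices to fix an arbitrary such $E$ and show the set of $\psi\in\SSS(\Hilbert)$ on which $E$ beats blind guessing has $u$-measure at most $1/2$. First I would write the reliability explicitly: combining \eqref{Rrho} with the collapse law \eqref{psiprimedef1}--\eqref{psiprimedef2}, $R_\psi(\E) = 1-p + \tr\bigl(E(p\,\diag\pr{\psi} - (1-p)\pr{\psi})\bigr)$, and hence $S_\E$ is the set where $\tr\bigl(E(p\,\diag\pr{\psi} - (1-p)\pr{\psi})\bigr) > \max\{p,1-p\} - (1-p) = \max\{0,\,2p-1\}$.

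The second step is to parametrize. In dimension $2$, write $\psi$ in the basis $B=\{b_1,b_2\}$ as $\psi_1 = \cos(\theta/2)$, $\psi_2 = e^{i\phi}\sin(\theta/2)$ so that $\pr{\psi}$ has Bloch vector $(\sin\theta\cos\phi,\sin\theta\sin\phi,\cos\theta)$ and $\diag\pr{\psi}$ has Bloch vector $(0,0,\cos\theta)$; under $u$, the point $(\theta,\phi)$ is uniform on the sphere. Then $p\,\diag\pr{\psi} - (1-p)\pr{\psi}$ is itself a $2\times 2$ self-adjoint matrix whose trace is $2p-1$ and whose traceless part has Bloch vector $v(\psi) = \bigl(-(1-p)\sin\theta\cos\phi,\,-(1-p)\sin\theta\sin\phi,\,(2p-1)\cos\theta\bigr)$. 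Writing $E = \tfrac12(a_0 I + \vec a\cdot\vec\sigma)$ with the constraint $0\le E\le I$ translating to $|\vec a|\le a_0\le 2-|\vec a|$, the quantity $\tr(E\cdot(\text{that matrix}))$ becomes an affine function of $(\theta,\phi)$-data, namely $\tfrac12 a_0(2p-1) + \tfrac12\,\vec a\cdot v(\psi)$. So $S_\E$ is cut out by an inequality of the form $\vec a\cdot v(\psi) > \text{const}$, i.e. a region bounded by the level set of a \emph{linear} functional of the vector $v(\psi)$.

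The third step is the geometric heart: $v(\psi)$, as $\psi$ ranges over $\SSS(\Hilbert)$ with the uniform measure, traces out an ellipsoid of revolution (semi-axes $1-p$, $1-p$, $|2p-1|$ in the three coordinate directions), and the pushforward of $u$ under $\psi\mapsto v(\psi)$ is, up to the obvious affine map, the uniform surface measure on that ellipsoid (since the map is just the coordinatewise rescaling of the Bloch sphere). A linear half-space $\{\vec a\cdot w > c\}$ intersected with an ellipsoid of revolution; one wants to show that if $R_\psi(\E)>\max\{p,1-p\}$ fails to have measure $\le1/2$, something breaks. The clean way is to note the symmetry $\psi\mapsto\psi^\perp$ (antipodal on the Bloch sphere, hence $v(\psi^\perp) = -v(\psi)$): the map $w\mapsto -w$ preserves the ellipsoid and preserves $u$-measure. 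Under this involution, $\tr\bigl(E(p\,\diag\pr{\psi}-(1-p)\pr{\psi})\bigr)$ at $\psi^\perp$ equals $\tfrac12 a_0(2p-1) - \tfrac12\vec a\cdot v(\psi)$. So the values at $\psi$ and $\psi^\perp$ average to $\tfrac12 a_0(2p-1)$, which is $\le \tfrac12(2p-1)\cdot(2-|\vec a|)\le 2p-1$ when $p\ge 1/2$ and $\le 0 = \max\{0,2p-1\}$ when $p\le 1/2$ is handled similarly via $a_0\ge |\vec a|\ge 0$ giving the average $\ge$-direction — precisely: $S_\E$ and its image under $\psi\mapsto\psi^\perp$ cannot both contain $\psi$ for ``most'' $\psi$ because the strict inequality cannot hold at both $\psi$ and $\psi^\perp$ once the average of the two sides lies on the wrong side of the threshold. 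Making this last pairing argument airtight — i.e. checking that for \emph{every} admissible $(a_0,\vec a)$ the average of the two reliabilities is $\le\max\{p,1-p\}$, so $S_\E$ meets each antipodal pair $\{\psi,\psi^\perp\}$ in at most one point — is the main obstacle; the boundary cases (threshold exactly attained, the measure-zero set where equality holds) need care, and one must verify the admissibility constraint $a_0\le 2-|\vec a|$ is exactly what is needed, not merely $a_0\le 2$. Once the pairing lemma is established, $u(S_\E)\le 1/2$ is immediate by Fubini over antipodal pairs.
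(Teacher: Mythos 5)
Your argument is correct and is essentially self-contained, whereas the paper simply defers this proposition to \cite{CT12c}; so the main value here is that you supply a proof at all. The antipodal pairing is the right idea and your verification of the key inequality is sound: with $E=\tfrac12(a_0I+\vec a\cdot\vec\sigma)$ and $0\le E\le I$ one has $0\le a_0\le 2$, and the pair-average $\tfrac12 a_0(2p-1)$ of the affine term is $\le 0$ for $p\le\tfrac12$ (using $a_0\ge0$) and $\le 2p-1$ for $p\ge\tfrac12$ (using $a_0\le 2$), so the strict inequality defining $S_\E$ cannot hold at both $\psi$ and $\psi^\perp$; since $\psi\mapsto\psi^\perp$ (antiunitary, hence $u$-preserving, and well defined on $S_\E$ because $R_\psi$ is phase-invariant) is an involution, $u(S_\E)\le 1/2$ follows. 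Two remarks. First, the ellipsoid discussion in your third step is dispensable and, as stated, not quite right: rescaling the Bloch sphere coordinatewise does \emph{not} push the uniform surface measure forward to the uniform surface measure on the resulting ellipsoid; fortunately nothing in your final argument uses this. Second, the ``pairing lemma'' you flag as the main obstacle is cleaner than you suggest: since $\pr{\psi}+\pr{\psi^\perp}=I$ and $\diag\pr{\psi}+\diag\pr{\psi^\perp}=I$ in dimension $2$, Proposition~\ref{prop:reliability} gives
\begin{equation}
\tfrac12\bigl(R_\psi(\E)+R_{\psi^\perp}(\E)\bigr)=\tfrac12\tr A=R_u(\E)\,,
\end{equation}
so the needed bound $\tfrac12(R_\psi+R_{\psi^\perp})\le\max\{p,1-p\}$ is literally Proposition~\ref{prop:uniform} for $n=2$, with no Bloch-vector computation required. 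This also makes transparent why the argument is special to $\dim\Hilbert=2$: only there does the uniform average localize onto two-point orbits.
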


\section{Measurements of $m(x)$ in GRWm}
\label{sec:m}

In this section we investigate the accuracy and reliability of genuine measurements of $m(x)$ by inhabitants of a GRWm universe. We have shown above that $m(x)$ is not measurable with microscopic accuracy. We now show that it \emph{is} measurable on the macroscopic level. Our analysis can be regarded as an elaboration of statements by Ghirardi \emph{et al.}\ \cite{GG96}, \cite[section 10.2]{BG03} to the effect that different degrees of ``accessibility'' of $m(x)$ can occur. Specifically, we confirm that the quantity $\R(V)$ that they introduced governs the measurability of $m(x)$: the average of $m(x)$ over $V$ can be measured accurately and reliably whenever $\R(V)$ is small.

So consider a coarse-grained, macroscopic version $\mm(x)$ of $m(x)$, for example
\be
\mm_1(x) = (g_\ell\ast m)(x) = \sum_{i=1}^N m_i \, \scp{\psi}{g_\ell(\widehat Q_i-x)|\psi}
\ee
with $\ast$ convolution, $g_\ell$ the Gaussian function of width $\ell$ as in \eqref{Gaussian}, and $\ell$ the length scale of the coarse graining (independent of the GRW length $\sigma$), or, based on a partition of $\RRR^3$ into cubes of side length $\ell$, 
\be
\mm_2(x) = \frac{1}{\ell^3} \int\limits_{C(x)} dx'\, m(x')\,,
\ee 
with $C(x)$ the cube containing $x$. 

Here is a simple procedure for a measurement of $\mm(x)$ (or of $m(x)$ with inaccuracy $\ell$). While this procedure is not practically feasible, it shows the possibility in principle and may suggest more practical schemes if those are desired. We consider a macroscopic system of $N$ ``particles'' with (GRW) wave function $\psi(q)=\psi(q_1,\ldots,q_N)$. Carry out an ideal quantum measurement of the position observables on all $N$ particles, with outcome $Q=(Q_1,\ldots,Q_N)$ distributed with distribution density $\PPP(Q=q)=|\psi(q)|^2$; let the estimator for $\mm(x)$ be
\be
M(x) = \sum_{i=1}^N m_i \, \delta^3(Q_i-x)
\ee
or a coarse-grained version $\MM(x)$ of $M(x)$, for example
\be
\MM_1(x) = \sum_{i=1}^N m_i \, g_\ell(Q_i-x)
\ee
or
\be
\MM_2(x) = \frac{1}{\ell^3} \int\limits_{C(x)} dx'\, M(x') = \frac{1}{\ell^3}\sum_{i:Q_i\in C(x)} m_i\,.
\ee 
Put differently, $M(x)$ are the outcomes of a joint ideal quantum measurement of the commuting observables $\widehat{M}(x)$, the mass density operators as in \eqref{hatMdef}.

It follows from \eqref{mdef2} that
\be
m(x)=\EEE\, M(x)\text{ and }\mm_{1,2}(x) = \EEE \MM_{1,2}(x)\,.
\ee
So the estimator will be close to the true value if its variance is small. Specifically, the relative inaccuracy with which $\mm_2(x)$ can be measured in this way is the standard deviation of $\MM_2(x)$ divided by $\mm_2(x)$, which is exactly Ghirardi's $\R(V)$ with $V=C(x)$,
\be\label{RCx}
\R(C(x))= \frac{\scp{\psi}{(\tfrac{1}{\ell^3}\widehat{M}(C(x))-\mm_2(x))^2|\psi}^{1/2}}{\mm_2(x)}\,.
\ee
While we do not have a proof that no other method of estimating $\mm_2(x)$ is more accurate, this seems plausible, as no better method comes to mind.

Obviously, the inaccuracy \eqref{RCx} depends on the wave function $\psi$. This leads us to the question, for typical wave functions arising from the GRW process, how large do we have to choose $\ell$ to make the inaccuracy $\R(C(x))$ smaller than, say, 10\%? Let us consider a few examples.

An object of macroscopic size consisting of a uniform solid or liquid at everyday conditions has interatomic distances of $10^{-10}$ to $10^{-9}$ m, and we may expect the wave function of the nucleus to be spread out over similar distances (except for permutation symmetry). Since, for $\ell=3\times 10^{-9}$ m, a volume of $\ell^3$ contains 30 to $3\times 10^4$ atoms, $\MM_2(x)$ involves an average over many atoms; thus, for this size of $\ell$ or larger, we may expect the variation of the wave function (such as the ground state of the solid) to have little effect on $\MM_2(x)$, and $\R(C(x))$ to be small.

Now consider a solid object $O$ of size $\delta$ (or even a sheet of thickness $\delta$, since only 1 dimension of space is relevant), an $\ell>\delta$, and a $\psi$ that is a superposition of two different positions of $O$ with a difference greater than $\ell$. Then one term in the superposition may correspond to $O$ lying entirely in $C(x)$, while the other term corresponds to $O$ lying entirely outside of $C(x)$. Let the coefficients of the terms be $c_1$ and $c_0$, respectively; then, for a suitable constant $m_0$, $\MM_2(x)$, if measured, equals $m_0$ with probability $p=|c_1|^2$ and equals $0$ with probability $q=|c_2|^2=1-|c_1|^2$. It follows that $\mm_2(x)=pm_0$, that the standard deviation of $\MM_2(x)$ is $m_0\sqrt{pq}$, and that $\R(C(x))=\sqrt{q/p}$, which is greater than $10\%$ for every $p<99\%$. Thus, except for extreme values of $p$, such a superposition yields quite a large value of $\R(C(x))$, and thus probably low accuracy as a genuine measurement of $\mm_2(x)$. However, since superpositions of different locations more than $\sigma$ apart are suppressed by the spontaneous collapses, such wave functions are unlikely to occur for $\ell\gg \sigma=10^{-7}$ m. Or rather, they cannot persist much longer than for $\Delta t= 1/N\lambda$.

These considerations suggest that $m(x)$ can usually be measured with small relative inaccuracy and high reliability at a spatial and temporal resolution of
\be
\Delta x = 10^{-7}\text{ m}\,,\quad
\Delta t = \frac{1}{N_{\Delta x}\lambda}\,,
\ee
where $N_{\Delta x}$ is the number of nucleons in the volume $(\Delta x)^3$. On the other hand, examples can be set up, at least artificial ones, for which these estimates are not valid: if $\psi=\varphi^{\otimes N}$ with $\varphi$ a 1-particle wave function that is spread out over distances much greater than $\sigma$, and if the $N$ particles do not interact (say, $H=0$), then the spontaneous collapses are not efficient at localizing the wave function, and $\R(C(x))$ will not become small until after $10^8$ years (when almost every particle has been hit by a collapse).

\appendix

\section{Proofs}
\label{sec:proofs}

\begin{prop}\label{prop:ens}
If $\mu_1\neq \mu_2$ are distributions of wave functions with equal density matrices $\rho_{\mu_1}=\rho_{\mu_2}$ then no experiment can distinguish between an ensemble of wave functions with distribution $\mu_1$ and one with $\mu_2$.
\end{prop}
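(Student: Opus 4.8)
The plan is to reduce the statement to the main theorem about POVMs together with the elementary identity $\int_{\SSS(\Hilbert)} \mu(d\psi)\, \scp{\psi}{E|\psi} = \tr(\rho_\mu E)$, valid for any positive operator $E$ by linearity of the trace. First I would treat the case of an experiment performed on a single member of the ensemble: by the main theorem about POVMs, such an experiment $\E$ is associated with a POVM $(E_z)$ on $\Hilbert$, and if the system's wave function is $\psi$ then $\PPP(Z=z) = \scp{\psi}{E_z|\psi}$. Averaging over the random $\psi$ drawn from $\mu_i$ gives $\PPP_{\mu_i}(Z=z) = \tr(\rho_{\mu_i} E_z)$, which depends on $\mu_i$ only through $\rho_{\mu_i}$; since $\rho_{\mu_1} = \rho_{\mu_2}$, the outcome statistics coincide.

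Next I would extend this to an experiment carried out on the whole ensemble at once, or on several members jointly and possibly adaptively. An ensemble of $K$ systems is itself a single quantum system with Hilbert space $\Hilbert^{\otimes K}$, and since the members are prepared independently with distribution $\mu_i$, the joint wave function is the random product $\psi_1 \otimes \cdots \otimes \psi_K$ with the $\psi_j$ independent and $\mu_i$-distributed; its density matrix is therefore $\rho_{\mu_i}^{\otimes K}$. Applying the main theorem about POVMs to this composite system, the outcome of any experiment on the ensemble has distribution $\tr\bigl(\rho_{\mu_i}^{\otimes K} E_z\bigr)$ for a suitable POVM $(E_z)$ on $\Hilbert^{\otimes K}$. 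From $\rho_{\mu_1} = \rho_{\mu_2}$ we get $\rho_{\mu_1}^{\otimes K} = \rho_{\mu_2}^{\otimes K}$, so all outcome probabilities agree. This covers arbitrary (including adaptive, sequential, or collective) measurement strategies, since every such strategy on $K$ systems is, by the main theorem, subsumed under some POVM on $\Hilbert^{\otimes K}$; hence the two ensembles are statistically indistinguishable.

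The step requiring the most care is the passage from single-system to ensemble experiments: one must check that the main theorem about POVMs genuinely applies to the composite system $\Hilbert^{\otimes K}$ (it does, since in GRWm, GRWf, and Bohmian mechanics the composite is again a legitimate quantum system to which the cited results apply), and that independence of the preparations yields precisely the tensor-power density matrix rather than merely some state with the same one-particle marginals. Once those two facts are in place the conclusion is immediate, because the map $\rho \mapsto \tr(\rho^{\otimes K} E)$ factors through $\rho$. No finite-dimensionality of $\Hilbert$ is needed, and the case of an arbitrarily large ensemble causes no difficulty either, since any actual experiment terminates after finitely many interactions and thus effectively acts on finitely many members.
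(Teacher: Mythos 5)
Your proof is correct and follows essentially the same route as the paper's: apply the main theorem about POVMs and observe that averaging over $\mu$ turns $\scp{\psi}{E_z|\psi}$ into $\tr(\rho_\mu E_z)$, which depends on $\mu$ only through its density matrix. Your second paragraph, covering collective and adaptive measurements on $K$ ensemble members via the tensor-power density matrix $\rho_{\mu}^{\otimes K}$, is a correct and worthwhile elaboration of a point that the paper's one-line proof leaves implicit.
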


\begin{proof}
This is a consequence of the main theorem about POVMs \eqref{mainthmPOVM}. If $\E$ is carried out on an ensemble of systems with wave functions $\psi\in\SSS(\Hilbert_\sys)$ distributed according to $\mu_1$ then the probability of obtaining outcome $z$ is
\be
\PPP(Z=z) 
=  \tr\Bigl( E(z) \rho_{\mu_1} \Bigr)\,,
\ee
so it depends on $\mu_1$ only through $\rho_{\mu_1}$. Thus, since $\rho_{\mu_2}=\rho_{\mu_1}$, the distribution of outcomes would be the same in an ensemble distributed according to $\mu_2$.
\end{proof}

\begin{prop}\label{prop:reliability}
Consider the collapse process as in \eqref{psiprimedef1}--\eqref{psiprimedef2} and arbitrary $\E$. Let $E_z$ be the POVM associated with $\E$ by the main theorem about POVMs, and $E=E_\yes$. Then 
\be\label{RpsiA}
R_\psi(\E) = \scp{\psi}{A|\psi} 
\ee
with
\be
A=p\diag E + (1-p)(I-E)
\ee
and $\diag$ the ``diagonal part'' as defined in \eqref{diagdef}. For $\psi$ as in \eqref{superposition}, $R_\psi(\E_1)=1-p/2$ (i.e., Eq.~\eqref{RE1} holds). For random $\psi$ with distribution $\mu$,
\be\label{RmuA}
R_\mu(\E) = \tr(\rho_\mu \, A)\,.
\ee
\end{prop}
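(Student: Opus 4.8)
The plan is to compute $R_\psi(\E)$ directly from its definition by conditioning on the post-collapse wave function $\psi'$ and then invoking the main theorem about POVMs. Write $E_z$ for the POVM on $\{\yes,\no\}$ associated with $\E$ and set $E=E_\yes$; the crucial feature, used repeatedly, is that $E$ is a \emph{fixed} positive operator on $\Hilbert$ that does not depend on the input state. First I would split
\be
R_\psi(\E)=\PPP(Z=\no,C=0)+\PPP(Z=\yes,C=1)
\ee
and evaluate each term by conditioning on $\psi'$. On $\{C=0\}$ we have $\psi'=\psi$ with probability $1-p$, and on $\{C=1\}$ we have $\psi'=c_k b_k$ with probability $p\,|\scp{b_k}{\psi}|^2$ for each $k$, by \eqref{psiprimedef1}--\eqref{psiprimedef2}. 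By \eqref{mainthmPOVM}, $\PPP(Z=\no\mid\psi'=\psi)=\scp{\psi}{(I-E)|\psi}$ and $\PPP(Z=\yes\mid\psi'=c_kb_k)=\scp{b_k}{E|b_k}$, the unit-modulus phase $c_k$ cancelling in the quadratic form.

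Next I would assemble the pieces. The $C=0$ contribution is $(1-p)\scp{\psi}{(I-E)|\psi}$, and the $C=1$ contribution is $p\sum_k |\scp{b_k}{\psi}|^2\,\scp{b_k}{E|b_k}$. The latter sum is exactly $\scp{\psi}{\diag E|\psi}$: writing $|\scp{b_k}{\psi}|^2\scp{b_k}{E|b_k}=\scp{\psi}{\pr{b_k}E\pr{b_k}|\psi}$ and summing over $k$ gives $\scp{\psi}{\bigl(\sum_k\pr{b_k}E\pr{b_k}\bigr)|\psi}$, which is $\scp{\psi}{\diag E|\psi}$ by \eqref{diagdef}. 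Adding the two contributions yields $R_\psi(\E)=\scp{\psi}{A|\psi}$ with $A=p\diag E+(1-p)(I-E)$, i.e.\ \eqref{RpsiA}. For the special case $\psi$ as in \eqref{superposition} with $\E=\E_1$, so that $E=I-\pr{\psi}$ and $B=\{|\text{here}\rangle,|\text{there}\rangle\}$, a one-line computation gives $\scp{\text{here}}{E|\text{here}}=\scp{\text{there}}{E|\text{there}}=\tfrac12$, hence $\diag E=\tfrac12 I$ and $A=\tfrac p2 I+(1-p)\pr{\psi}$, so $R_\psi(\E_1)=\tfrac p2+(1-p)=1-p/2$, which is \eqref{RE1}.

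Finally, for random $\psi\sim\mu$, I would use that $A$ depends only on $\E$ (and on $p$ and $B$), not on $\psi$, so the reliability averages linearly: by definition of $R_\mu$ and then linearity of the trace, $R_\mu(\E)=\EEE_\mu R_\psi(\E)=\EEE_\mu\scp{\psi}{A|\psi}=\EEE_\mu\tr(A\pr{\psi})=\tr\!\bigl(A\,\EEE_\mu\pr{\psi}\bigr)=\tr(\rho_\mu A)$ by \eqref{rhomudef}, which is \eqref{RmuA} (and in particular shows $R_\mu(\E)$ depends on $\mu$ only through $\rho_\mu$). There is no serious obstacle; the argument is essentially bookkeeping. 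The only points that need care are (i) applying the main theorem about POVMs so that the outcome statistics are genuinely quadratic in the fixed, state-independent operator $E$, which is precisely what licenses the passage from $\psi$ to $\rho_\mu$, and (ii) noting that the random phases $c_k$ drop out of all the relevant quadratic forms.
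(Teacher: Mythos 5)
Your proof is correct and follows essentially the same route as the paper's: condition on $C$, apply the main theorem about POVMs to get $\PPP(Z=\yes|C=0)=\scp{\psi}{E|\psi}$ and $\PPP(Z=\yes|C=1)=\scp{\psi}{\diag E|\psi}$, and combine. You merely spell out the details (the computation $\diag E=\tfrac12 I$ for the special case, and the linearity-of-trace step for $R_\mu$) that the paper leaves implicit.
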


\begin{proof}
We have that
\begin{align}
\PPP(Z=\yes|C=0) &= \scp{\psi}{E|\psi}\\[4mm] 
\PPP(Z=\yes|C=1) &= \sum_k \bigl| \scp{b_k}{\psi} \bigr|^2\,\scp{b_k}{E|b_k} \\
&= \scp{\psi}{\diag E|\psi}\,.
\end{align}
Now \eqref{RpsiA} follows, together with \eqref{RE1} and \eqref{RmuA}; \eqref{RE1} is obtained as the special case with $\psi=2^{-1/2}(b_1+b_2)$ and $E=I-\pr{\psi}$. 
\end{proof}

\noindent{\bf Proposition~\ref{prop:uniform}.} 
\textit{Let $u$ be the uniform distribution on $\SSS(\Hilbert)$. Then
$R_u(\E)\leq \max\{p,1-p\}$ for every experiment $\E$ at time $t_2$. If $p=\frac{1}{2}$ then, in fact, $R_u(\E)=1/2$ for every experiment $\E$ at time $t_2$.}

\medskip

\begin{proof}
By \eqref{RmuA},
\be\label{RuE}
R_u(\E) = \frac{1}{n} \tr A = \frac{p}{n} \tr  E + \frac{1-p}{n}\tr(I-E)= 1-p-\frac{1-2p}{n}\tr E\,. 
\ee
Note that $0\leq E\leq I$ and thus $0\leq \tr E \leq n$. For $0\leq p\leq 1/2$, \eqref{RuE} is $\leq 1-p$ since $\tr E\geq 0$ and $1-2p\geq 0$. For $1/2<p\leq 1$, rewrite \eqref{RuE} as
\be
R_u(\E) = p-\frac{2p-1}{n}\tr(I- E) \,,
\ee
note $\tr(I-E)\geq 0$ and $2p-1\geq 0$, and conclude $R_u(\E)\leq p$.
\end{proof}

\noindent{\bf Proposition~\ref{prop:noinf}.} 
\textit{Consider the collapse process as in \eqref{psiprimedef1}--\eqref{psiprimedef2} and an arbitrary experiment $\E$ at time $t_2$, possibly with more than two possible outcomes. Let $\psi$ be uniformly distributed on $\SSS(\Hilbert)$. Then $\PPP(C=1|Z)=p$ and $\PPP(C=0|Z)=1-p$.}

\medskip

\begin{proof}
Let $E_z$ be the POVM of $\E$; then
\begin{align}
\PPP(C=1|Z=z)
&= \frac{\PPP(Z=z|C=1)\PPP(C=1)}{\PPP(Z=z|C=0)\PPP(C=0)+\PPP(Z=z|C=1)\PPP(C=1)}\\[6mm]
&=\frac{\tr(\rho\,\diag E_z) p }{\tr (\rho\, E_z)(1-p)+\tr(\rho\, \diag E_z)p}=p
\end{align}
using $\rho=\frac{1}{n}I$.
\end{proof}

\bigskip

\textit{Acknowledgments.}
We thank Sheldon Goldstein for helpful discussions.
Both authors are supported in part by NSF Grant SES-0957568. 
R.T.~is supported in part by grant no.~37433 from the John Templeton Foundation and by the Trustees Research Fellowship Program at Rutgers, the State University of New Jersey. 

\end{document}